\newcommand{\MX}{\mathcal{X}}
\newcommand{\MXall}{\mathcal{X}_G}
\newcommand{\MA}{\mathcal{A}}
\newcommand{\MC}{\mathcal{C}}
\newcommand{\MI}{\mathcal{I}}
\newcommand{\ML}{\mathcal{L}}
\newcommand{\MM}{\mathcal{M}}
\newcommand{\MT}{\mathcal{T}}
\newcommand{\MS}{\mathcal{S}}
\newcommand{\MY}{\mathcal{Y}}
\newcommand{\Null}{\textsc{Null}}
\newcommand{\Next}{\textsc{Next}}
\newcommand{\Prev}{\textsc{Prev}}
\newcommand{\algref}[1]{Algorithm~{\rm\ref{alg:#1}}}
\renewcommand{\eqref}[1]{(\ref{eq:#1})}
\newcommand{\figref}[1]{Figure~\ref{fig:#1}}
\newcommand{\lemref}[1]{Lemma~\ref{lem:#1}}
\newcommand{\lineref}[1]{line~\ref{line:#1}}
\renewcommand{\mid}{:\,}
\newcommand{\secref}[1]{Section~\ref{sec:#1}}
\newcommand{\thmref}[1]{Theorem~\ref{thm:#1}}
\long\def\invis#1{}
\newcommand{\True}{\textsc{True}}
\newcommand{\False}{\textsc{False}}
\newcommand{\odd}{\mathrm{odd}}
\newcommand{\even}{\mathrm{even}}
\newcommand{\astup}{{\ast\uparrow}}
\newcommand{\astdown}{{\ast\downarrow}}
\newtheorem{thm}{Theorem}
\newtheorem{lem}{Lemma}
\newenvironment{proof}{\medskip
  \noindent{\scshape Proof:}}{\quad $\Box$\medskip}
\newenvironment{myframe}{\begin{trivlist}\item[]
    \hrule
    \hbox to \linewidth\bgroup
    \advance\linewidth by -30pt
    \hsize=\linewidth
    \vrule\hfill
    \vbox\bgroup
    \vskip15pt
    
    \begin{minipage}{\linewidth}}{%
    \end{minipage}\vskip15pt
    \egroup\hfill\vrule
    \egroup\hrule
\end{trivlist}}
\begin{document}
\title{Maximum Weighted Matching with Few Edge Crossings for 2-Layered Bipartite Graph}

\invis{
\author{Kazuya Haraguchi\inst{1}\and Kotaro Torii\inst{2}\and Motomu Endo\inst{2}}
\authorrunning{K. Haraguchi et al.}
\institute{Otaru University of Commerce, Japan\\
  \email{haraguchi@res.otaru-uc.ac.jp}\and
  Nara Institute of Science and Technology}
}
\author{Kazuya Haraguchi\thanks{Corresponding author. E-mail: {\tt haraguchi@res.otaru-uc.ac.jp}}\and Kotaro Torii\and Motomu Endo}
\date{}


\maketitle

\begin{abstract}
  Let $c$ denote a non-negative constant. 
  Suppose that we are given an edge-weighted bipartite graph $G=(V,E)$
  with its 2-layered drawing and a family $\MX\subseteq E\times E$
  of intersecting edge pairs. 
  We consider the problem of finding a maximum weighted matching $M^\ast$
  such that each edge in $M^\ast$ intersects
  with at most $c$ other edges in $M^\ast$,
  and that all edge crossings in $M^\ast$ are contained in $\MX$.
  In the present paper, we propose polynomial-time algorithms for
  the cases of $c=1$ and 2.
  The time complexities of the algorithms are
  $O\big((k+m)\log n+n\big)$ for $c=1$ 
  and $O\big(k^3+k^2n+m(m+\log n)\big)$ for $c=2$, respectively,
  where $n=|V|$, $m=|E|$ and $k=|\MX|$. 
  %
  %
\end{abstract}

\section{Introduction}
\label{sec:intro}

Let $G=(A,B,E)$ denote an edge-weighted bipartite graph,
where $\{A,B\}$ is the bipartition of the entire vertex set $V=A\cup B$
and $E\subseteq A\times B$ is the edge set.
Denoting by $n_A=|A|$ and $n_B=|B|$,
we let $A=\{a_1,\dots,a_{n_A}\}$, $B=\{b_1,\dots,b_{n_B}\}$,
$n=n_A+n_B$ and $m=|E|$. 
We abbreviate $(a_i,b_q)\in E$ into $a_ib_q$ for simplicity. 
The edge weight is given by a function $w:E\rightarrow\mathbb{R}_+$. 
A subset $M\subseteq E$ of edges is called a {\em matching\/}
if no two edges in $M$ share an endpoint in common.
Denoted by $w(M)$, 
the weight of a matching $M$ is defined as the sum of edge weights over $M$,
i.e., $w(M)=\sum_{e\in M}w(e)$. 

A {\em $2$-layered drawing\/}~\cite{BETT.1999,K.2002,STT.1981} of a bipartite graph $G$
is a 2D drawing of $G$ such that 
$a_1,\dots,a_{n_A}\in A$
and $b_1,\dots,b_{n_B}\in B$ are put on two horizontal lines
as distinct points from left to right, respectively, and that
every edge is drawn as a straight line segment between the endpoints.
Two edges $e=a_ib_q$ and $e'=a_jb_p$
{\em make a crossing\/} or {\em intersect}
if either ($i<j$ and $p<q$) or ($i>j$ and $p>q$) holds.

In our research, we study the problem of computing a maximum weighted (max-weighted for short) matching under
a constraint such that only a small number of
edge crossings are admitted.
The constraint we take up here
is that each matching edge may intersect with at most $c$ other matching edges,
where $c$ is a non-negative constant. 

We formulate
the problem 
in a more general setting.  
For $e,e'\in E$, we call $\{e,e'\}$ a {\em crossing pair\/}
if $e$ and $e'$ make a crossing. 
Let $\MXall$ denote the set of all crossing pairs in $G$.
For input,
we accept a subset $\MX\subseteq \MXall$ as well as $G$ and $w$,
where $\MX$ is the set of crossing pairs
that are admitted to make crossings. 
Let us call $\MX$ an {\em admissible set\/}
and a crossing pair in $\MX$ an {\em admissible pair\/}.
A matching $M$ is called {\em at-most-$c$-crossings-per-edge}
({\em $c$-CPE}) if each edge in $M$ makes at most $c$ crossings
along with other edges in $M$ and every crossing pair that appears in $M$
belongs to $\MX$.

We formalize the {\em max-weighted $c$-CPE matching problem} ({\em MW-$c$-CPEMP}) as follows.

\begin{myframe}
  \begin{description}
  \item[\underline{Max-weighted $c$-CPE matching problem (MW-$c$-CPEMP)}]
  \item[Input:] A bipartite graph $G=(A,B,E)$ along with its 2-layered drawing,
    a positive edge weight function $w:E\rightarrow\mathbb{R}_+$,
    and an admissible set $\MX\subseteq\MXall$. 
  \item[Output:]  A $c$-CPE matching $M^\ast\subseteq E$ that maximizes $w(M^\ast)$. 
  \end{description}
\end{myframe}

For example, when $\MX=\MXall$,
we are asked to compute a max-weighted $c$-CPE matching
such that any crossing pair may appear. 
When $c=0$ or $\MX=\emptyset$,
the problem asks for a max-weighted non-crossing matching
since no crossing pair is admitted.

In the present paper,
we propose polynomial time algorithms 
for the MW-$c$-CPEMP with $c\in\{1,2\}$. 
Our approach reduces the MW-$c$-CPEMP to what we call the 
{\em non-contact trapezoid selection problem} ({\em NTSP}). 
We then solve the reduced NTSP problem
by an algorithm named \textsc{SelectTrape},
which is an extension of 
the Malucelli et al.'s $O(m\log n)$-time algorithm
for the MW-0-CPEMP~\cite{MOP.1993}. 
The time complexities of the proposed algorithms
are $O\big((k+m)\log n+n\big)$ for $c=1$
and $O\big(k^3+k^2n+m(m+\log n)\big)$ for $c=2$ respectively,
where $k=|\MX|$.

The paper is organized as follows.
We describe our motivation and related work in \secref{bg}.
In \secref{ntsp}, we introduce the NTSP and present 
the algorithm \textsc{SelectTrape}. 
We then explain how to reduce the MW-$c$-CPEMP
to the NTSP in \secref{cpemp},
followed by concluding remarks in \secref{conc}.

\section{Background}
\label{sec:bg}
\subsection{Motivation}

Plant chronobiologists
would like to compare gene expression dynamics at the individual level
(i.e., macro level) with the single cell level (i.e., micro level)
along the same time axis.
However, there is a technically hard issue.
Conventional microarray or RNA-sequencing can easily measure
individual gene expression patterns in actual time-series
but have limited spatial resolutions.
On the other hand, single-cell transcriptome techniques
have ultimate spatial resolution of gene expression analysis,
but most techniques are requiring destruction of cells
to perform single cell transcriptome and
thus actual time-series analysis is impossible.
Thus, to provide an analytic tool to
achieve higher spatiotemporal resolution was required. 

As an alternative, single cell analysis often
uses {\em pseudo time-series reconstruction}
for revealing cell-state transition (e.g., \cite{Wishbone,Monocle}).
Pseudo time reconstruction is a process
that orders cells transcriptome on a hypothetical time axis,
along which they show continuous changes in the transcriptome.
However, ordinal scale-based pseudo time-series will not
provide any time information so that
it is impossible to analyze circadian rhythm,
for example, in a single cell resolution.

We have hypothesized that timing of significant gene expression peak
on the pseudo time-series is comparable to that on the actual
time-series. In our recent work~\cite{Tetal},
we formulated the problem of
estimating the actual time of cell expressions
as the MW-0-CPEMP. 
We considered a 2-layered drawing
of a complete bipartite graph $G=(A,B,E)$
such that $A$ is the set of individual expression records
that are sorted in the actual-time order, and
$B$ is the set of cell expression records
that are sorted in a hypothetical order. 
We weighted each edge $a_ib_q\in E$ by a heuristic method
that evaluates how $a_i$ and $b_q$ are likely to match.
Solving the MW-0-CPEMP on $G$,
we estimated the actual time of a gene expression $b_q$
by the time of the individual expression $a_i$
to which $b_q$ is matched.
We imposed the non-crossing constraint to preserve the vertex orders. 
We observed that the model can be a useful tool for actual-time estimation,
compared with conventional actual-time estimation methods.

However, the non-crossing constraint is not necessarily a hard constraint
since cell expression records may contain noise in practice. 
In pursuit of alternative models, 
we study the MW-$c$-CPEMP for a constant $c\ge 1$.

\subsection{Related Work}
\label{sec:bg.rel}
The MW-0-CPEMP is an extension of the 
longest common subsequence problem
on given two sequences~\cite{CLRS.2009}.
It has an application
in the sequence alignment problem 
that appears in bioinformatics~\cite{WJ.1994} and  
in natural language processing~\cite{NSUHN.2016}.

Knauer et al.~\cite{KSSW.2007} studied the problem
of finding a subgraph that has few edge crossings;
given a graph (not necessarily bipartite)
and its geometric drawing (i.e., every vertex is specified by a 2D point, 
all edges are drawn as straight line segments,
and no two edges overlap or intersect at a vertex),
we are asked to find a subgraph of a certain class
that makes the minimum number of edge crossings.
They showed that,
for spanning trees, $s$-$t$ paths, cycles, matchings of a fixed size, and
1- or 2-factors,
it is NP-hard to approximate the minimum number of edge crossings
within a factor of $z^{1-\epsilon}$ for any $\epsilon>0$, where $z$ denotes the number of edge crossings in the given graph.
They also presented fixed-parameter algorithms
to decide whether there is a non-crossing subgraph
of one of the above graph classes,
where $z$ is used as the parameter.

The non-crossing (or crossing-free) constraint has been considered
for some problems of finding an ``optimal'' subgraph. 
It is Malucelli et al.~\cite{MOP.1993}
who first studied the algorithmic aspect of the MW-0-CPEMP
explicitly.
For the edge-unweighted case, 
they provided a polynomial-time algorithm
that runs in $O(m\log\log n)$ time or in $O(m+\min\{n\mu,m\log\mu\})$ time,
where $\mu$ denotes the cardinality of a maximum 0-CPE matching. 
They also extended the algorithm to the edge-weighted case,
which yields an $O(m\log n)$ time algorithm. 
A bipartite graph is {\em convex} if, for every $a_i\in A$,
$a_ib_p,a_ib_q\in E$ $(p\le q)$ implies
$a_ib_{p'}\in E$ for all $p\le p'\le q$.
For the MW-0-CPEMP in
edge-unweighted convex bipartite graphs,
Chen et al.~\cite{CLW.2015} presented an algorithm
whose running time is $O(n\log n)$. 
Carlsson et al.~\cite{CARB.2015} considered
the Euclidean non-crossing bipartite matching problem,
where each vertex is represented by a 2D point.
The objective is to find a non-crossing
perfect matching whose longest edge is minimized. 
They showed that the problem is NP-hard in general,
but that it is polynomially-solvable in some special cases. 
More recently, Altinel et al.~\cite{AAST.2018}
showed that the minimum cost non-crossing flow problem
on a layered network is NP-hard.
Ruangwises and Itoh~\cite{RI.2019}
studied the stable marriage problem
under the non-crossing constraint,
showing that there exists a weakly stable non-crossing
matching for any instance.

The {\em conflict pair constraint} 
(or {\em negative disjunctive constraint})
is a generalization of the non-crossing constraint. 
Represented by a {\em conflict graph} $\hat{G}=(E,\MC)$,
this constraint prohibits a solution from 
including two edges $e,e'\in E$ such that $\{e,e'\}\in \MC$. 
The minimum cost perfect matching problem with conflict pair constraints
is strong NP-hard for a general graph $G$ even if the conflict graph $\hat{G}$
is a collection of single edges~\cite{DPSW.2011}
and is NP-hard even for $G$ that consists of 4-cycles~\cite{OZP.2013};
it turns out that the problem is NP-hard for a bipartite graph.
For this type of constraint,
there are also studies on the transportation problem~\cite{S.1998,S.2002},
the minimum spanning tree problem~\cite{DPS.2009,SU.2015,ZKP.2011},
and the max-flow problem~\cite{PS.2013}.

\section{Non-contact Trapezoid Selection Problem}
\label{sec:ntsp}

To solve the MW-$c$-CPEMP,
we reduce the problem to  
what we call the non-contact trapezoid selection
problem (NTSP). 
In this section, we define the NTSP 
and propose an efficient algorithm for it.
The algorithm is an extension of the Malucelli et al.'s algorithm~\cite{MOP.1993} for the MW-0-CPEMP.  

\subsection{Problem Description}
Suppose two distinct horizontal lines on the 2D plane.
Let $A=\{a_1,\dots,a_{n_A}\}$ and $B=\{b_1,\dots,b_{n_B}\}$
denote vertex sets. 
We put $a_1,\dots,a_{n_A}$ on the upper line from left to right,
and $b_1,\dots,b_{n_B}$ on the lower line from left to right.  
We are given a collection $\MT=\{T_1,\dots,T_z\}$
of weighted trapezoids such that
each $T_s\in\MT$ is given its weight,
denoted by $\omega(T_s)$,
and its two upper corners are among $A$,
whereas its two lower corners are among $B$. 
We denote by $\lambda_A(T_s)$ (resp., $\gamma_A(T_s)$)
the index $i$ (resp., $j$) of 
the upper-left corner $a_i$ (resp., upper-right corner $a_j$).
Similarly, we denote by $\lambda_B(T_s)$
(resp., $\gamma_B(T_s)$)
the index $p$ (resp., $q$) of 
the lower-left corner $b_p$ (resp., lower-right corner $b_q$). 
We admit $T_s$ to be a triangle or a line segment. 
Then $\lambda_A(T_s)\le\gamma_A(T_s)$ and
$\lambda_B(T_s)\le\gamma_B(T_s)$ hold.

Given $(A,B,\MT,\omega)$,
the NTSP asks for a max-weighted subcollection $\MS\subseteq\MT$
such that any $T_s,T_t\in\MS$ do not contact each other. 
Specifically, for any $T_s,T_t\in\MS$ with $T_s\ne T_t$
and $\lambda_A(T_s)\le\lambda_A(T_t)$,
it should hold that $\gamma_A(T_s)<\lambda_A(T_t)$ and $\gamma_B(T_s)<\lambda_B(T_t)$. 


\subsection{Partial Order Based Algorithm}
\label{sec:ntsp.poset}
We can solve the NTSP by using the notion of partial order. 
Let us introduce a binary relation $\prec$ on $\MT$;
For $T_s,T_t\in\MM$, we write $T_s\prec T_t$
if 
$\gamma_A(T_s)<\lambda_A(T_t)$
and $\gamma_B(T_s)<\lambda_B(T_t)$. 
One easily sees that $\prec$ is a (or an irreflexive) partial order on $\MT$,
and thus $(\MT,\prec)$ is a partially ordered set (poset). 
We say that $T_s$ and $T_t$ are {\em comparable\/}
if either $T_s\prec T_t$ or $T_t\prec T_s$ holds.
For a subcollection $\MC\subseteq\MT$,
the poset $(\MC,\prec)$ (or $\MC$) is called a {\em chain}
if every $T,T'\in\MC$ are comparable. 
Obviously $\MC$ is a feasible solution of the NTSP
iff it is a chain. 

We represent the poset $(\MT,\prec)$ by a 
directed acyclic graph (DAG).
We denote the DAG by $D=(\MT\cup\{\phi\},\MA)$,
where $\phi$ is the dummy node
and $\MA=\MA^\phi\cup \MA^\prec$
is the arc set such that
\begin{align*}
  \MA^\phi&=\{(\phi,T')\mid T'\in\MT\},\\
  \MA^\prec&=\{(T,T')\in\MT\times\MT\mid T\prec T'\}. 
\end{align*}
For an arc $(T,T')\in \MA$, 
we define the distance to be $\omega(T')$. 
Any feasible solution of the NTSP is represented by a path from $\phi$. 
Then we can solve the NTSP by solving the longest path problem on $D$.
The time complexity of this algorithm 
is $O(z^2)$ because we can construct $D$ 
and solve the longest path problem in $O(z^2)$ time~\cite{CLRS.2009}.

\subsection{An Efficient Algorithm {\sc SelectTrape}}
We propose a faster algorithm whose time complexity is $O(z\log n+n)$. 
The proposed algorithm is based on the 
Malucelli et al.'s algorithm~\cite{MOP.1993}
for the MW-0-CPEMP. 
The MW-0-CPEMP is regarded as a special case
of the NTSP such that every trapezoid $T\in\MT$
is a line segment, that is, $\lambda_A(T)=\gamma_A(T)$ and
$\lambda_B(T)=\gamma_B(T)$.
We extend the Malucelli et al.'s algorithm 
based on this observation.

For two integers $i,j$
with $1\le i\le j\le n_A$,
let us denote $[i,j]=\{i,\dots,j\}$ and $[i]=[1,i]$.  
We define $A[i,j]\triangleq\{a_i,\dots,a_j\}$
and $A[j]\triangleq A[1,j]$. 
Similarly, for integers $p,q$ with $1\le p\le q\le n_B$,
we define a subset $B[p,q]\triangleq\{b_p,\dots,b_q\}$
and $B[q]\triangleq B[1,q]$. 
We say that a trapezoid $T$ is {\em contained in $A[i,j]\cup B[p,q]$}
if all corners are contained in the vertex subset, that is,
$\lambda_A(T),\gamma_A(T)\in [i,j]$ and
$\lambda_B(T),\gamma_B(T)\in [p,q]$. 

For $T_s\in\MT$, 
we denote by $\mu_\MT(T_s)$ the max-weight
of a feasible solution that has $T_s$ as the rightmost trapezoid
(i.e., no $T$ in the solution satisfies $T_s\prec T$).
For $(i,q)\in [n_A]\times[n_B]$, we denote by $\mu_\gamma(i,q)$
the max-weight of a feasible solution
such that the trapezoids are contained in $A[i]\cup B[q]$
and the lower-right corner of the rightmost trapezoid is exactly $b_q$.
For convenience, we let $\mu_\gamma(0,q)=0$ for all $q\in[n_B]$. 
The following lemmas are obvious by the definitions.

\begin{lem}
  \label{lem:NTSP.muMT}
  Suppose that $(A,B,\MT,\omega)$ is given.
  For $T_s\in\MT$, we have
  \begin{align}
    \mu_\MT(T_s)=\omega(T_s)+\max\Big\{\mu_\gamma(\lambda_A(T_s)-1,q)\mid q\in[\lambda_B(T_s)-1]\Big\}.\label{eq:NTSP.muMT}
  \end{align}
\end{lem}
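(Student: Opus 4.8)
The plan is to prove the identity by two opposing inequalities, each obtained by unwinding the definitions of $\mu_\MT$ and $\mu_\gamma$ together with the characterization of $\prec$. The driving structural observation is that a feasible solution is exactly a chain of $(\MT,\prec)$, and that ``$T_s$ is the rightmost trapezoid'' means $T_s$ is a maximal element of that chain; since any two elements of a chain are comparable, every element $T\ne T_s$ of such a chain must therefore satisfy $T\prec T_s$.

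First I would establish the upper bound $\mu_\MT(T_s)\le\omega(T_s)+\max\{\mu_\gamma(\lambda_A(T_s)-1,q)\mid q\in[\lambda_B(T_s)-1]\}$. Let $\MC$ be a chain attaining $\mu_\MT(T_s)$ with $T_s$ rightmost, and set $\MC'=\MC\setminus\{T_s\}$. By the observation above every $T\in\MC'$ obeys $T\prec T_s$, i.e.\ $\gamma_A(T)<\lambda_A(T_s)$ and $\gamma_B(T)<\lambda_B(T_s)$; since $\lambda_A(T)\le\gamma_A(T)$ and $\lambda_B(T)\le\gamma_B(T)$, all $A$-corners of $\MC'$ lie in $A[\lambda_A(T_s)-1]$. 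If $\MC'=\emptyset$ its weight is $0$; otherwise let $T_r$ be its rightmost element and put $q=\gamma_B(T_r)$. As $T_r$ is maximal in $\MC'$, every other $T\in\MC'$ satisfies $T\prec T_r$, whence $\gamma_B(T)<\lambda_B(T_r)\le q$, so $\MC'$ is contained in $A[\lambda_A(T_s)-1]\cup B[q]$ with lower-right corner exactly $b_q$. Hence the weight of $\MC'$ is at most $\mu_\gamma(\lambda_A(T_s)-1,q)$, and $q=\gamma_B(T_r)<\lambda_B(T_s)$ gives $q\in[\lambda_B(T_s)-1]$; taking the maximum over $q$ yields the bound.

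Next I would prove the reverse inequality $\mu_\MT(T_s)\ge\omega(T_s)+\mu_\gamma(\lambda_A(T_s)-1,q)$ for each admissible $q\in[\lambda_B(T_s)-1]$. Given a chain $\MC'$ attaining $\mu_\gamma(\lambda_A(T_s)-1,q)$, its rightmost trapezoid $T_r$ satisfies $\gamma_A(T_r)\le\lambda_A(T_s)-1<\lambda_A(T_s)$ and $\gamma_B(T_r)=q\le\lambda_B(T_s)-1<\lambda_B(T_s)$, so $T_r\prec T_s$; since every other element of $\MC'$ precedes $T_r$ and hence $T_s$ by transitivity, $\MC'\cup\{T_s\}$ is a chain with $T_s$ as its rightmost element. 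This witnesses a feasible solution of weight $\omega(T_s)+\mu_\gamma(\lambda_A(T_s)-1,q)$, and maximizing over $q$ completes this direction. Combining the two inequalities gives the claimed equality.

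The only delicate point is the bookkeeping at the boundary, and I expect that to be the sole place where care is needed. The empty prefix chain $\MC'=\emptyset$ (equivalently $\lambda_A(T_s)=1$, where $A[0]$ is empty) must be reconciled with the right-hand side, which is precisely the role of the convention $\mu_\gamma(0,q)=0$; likewise, when $\lambda_B(T_s)=1$ the index set $[\lambda_B(T_s)-1]$ is empty and the maximum is read as $0$, matching $\mu_\MT(T_s)=\omega(T_s)$. Apart from this routine case analysis, the statement follows directly from the definitions, as the text asserts.
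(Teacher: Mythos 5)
Your proof is correct: the paper itself offers no argument for this lemma (it is dismissed with ``The following lemmas are obvious by the definitions''), and your two-inequality unwinding of $\mu_\MT$ and $\mu_\gamma$ --- every non-maximum element of a chain whose maximum is $T_s$ must satisfy $T\prec T_s$, and conversely any witness for $\mu_\gamma(\lambda_A(T_s)-1,q)$ with $q\in[\lambda_B(T_s)-1]$ extends by $T_s$ to a chain with $T_s$ rightmost --- is exactly the routine verification the paper implicitly appeals to. Your treatment of the boundary cases (reading the maximum over an empty index set as $0$, and interpreting $\mu_\gamma$ so that the empty solution contributes weight $0$, consistent with the convention $\mu_\gamma(0,q)=0$ and the algorithm's initialization $\hat{\mu}_\gamma(q)\gets 0$) is also the correct reading, and is indeed the only point requiring care.
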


\begin{lem}
  \label{lem:NTSP.mugamma}
  Suppose that $(A,B,\MT,\omega)$ is given.
  For $(i,q)\in [n_A]\times[n_B]$, we have  
  \begin{align*}
  \mu_\gamma(i,q)=
  \max\Big\{\mu_\gamma(i-1,q),\max_{T_s\in\MT\mid (\gamma_A(T_s),\gamma_B(T_s))=(i,q)}\{\mu_\MT(T_s)\}\Big\}.
  \end{align*}
\end{lem}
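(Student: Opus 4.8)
The plan is to prove the identity by establishing the two inequalities ``$\le$'' and ``$\ge$'' separately, arguing in both directions through the rightmost (that is, $\prec$-maximal) trapezoid of the solution at hand. The key preliminary observation is that a feasible solution $\MS$ is a chain, so it possesses a unique maximal element $T_s$, and every other $T\in\MS$ satisfies $T\prec T_s$; by the definition of $\prec$ together with $\lambda_A(T_s)\le\gamma_A(T_s)$ and $\lambda_B(T_s)\le\gamma_B(T_s)$, this forces $\gamma_A(T)<\gamma_A(T_s)$ and $\gamma_B(T)<\gamma_B(T_s)$ for each such $T$. Hence $T_s$ simultaneously attains the largest upper-right index and the largest lower-right index occurring in $\MS$, which is precisely what lets me read the containment region of $\MS$ off the corners of $T_s$.

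For ``$\le$'', I would take an arbitrary solution $\MS$ counted by $\mu_\gamma(i,q)$ and split on the value of $\gamma_A(T_s)$. The defining condition of $\mu_\gamma$ gives $\gamma_B(T_s)=q$, while containment in $A[i]$ gives $\gamma_A(T_s)\le i$. If $\gamma_A(T_s)=i$, then $(\gamma_A(T_s),\gamma_B(T_s))=(i,q)$, and since $\MS$ is feasible with rightmost trapezoid $T_s$, its weight is at most $\mu_\MT(T_s)$. If instead $\gamma_A(T_s)\le i-1$, the observation above forces $\gamma_A(T)\le i-1$ for every $T\in\MS$, so $\MS$ is contained in $A[i-1]\cup B[q]$ and its weight is at most $\mu_\gamma(i-1,q)$. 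Maximizing over all such $\MS$ yields the ``$\le$'' direction.

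For ``$\ge$'', I would exhibit, for each term on the right-hand side, a solution counted by $\mu_\gamma(i,q)$ that realizes it. A solution attaining $\mu_\gamma(i-1,q)$ lies in $A[i-1]\cup B[q]\subseteq A[i]\cup B[q]$ and the lower-right corner of its rightmost trapezoid is still $b_q$, so it is also counted by $\mu_\gamma(i,q)$; and for any $T_s$ with $(\gamma_A(T_s),\gamma_B(T_s))=(i,q)$, a solution attaining $\mu_\MT(T_s)$ has $T_s$ as its rightmost trapezoid and therefore lies in $A[i]\cup B[q]$ with lower-right corner $b_q$ of its rightmost trapezoid. I do not expect a genuine obstacle here; the only point needing a word of care is the degenerate situation in which one of the two maximands is vacuous (for instance, no trapezoid has corner $(i,q)$, or no feasible solution exists), which is absorbed by the boundary convention $\mu_\gamma(0,q)=0$ and the usual reading of a maximum over an empty collection.
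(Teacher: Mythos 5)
Your proof is correct, and it fills in what the paper does not spell out: the paper states this lemma (together with \lemref{NTSP.muMT}) with no argument at all, declaring both ``obvious by the definitions.'' Your two-inequality verification --- using that a feasible solution is a chain, so its rightmost trapezoid simultaneously maximizes both right-corner indices, and then splitting on whether $\gamma_A(T_s)=i$ or $\gamma_A(T_s)\le i-1$ --- is exactly the definition-unwinding the paper leaves implicit, including the correct handling of the vacuous/boundary cases via $\mu_\gamma(0,q)=0$.
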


In \algref{NTSP}, 
we show an algorithm \textsc{SelectTrape}
that computes the optimal weight 
of a given NTSP instance.
The algorithm repeats the outer for-loop for $i=1,\dots,n_A$.
We do not store $\mu_\gamma(i,q)$ for all $i\in[n_A]$,
but only for the current $i$.
It is stored as $\hat{\mu}_\gamma(q)$. 
When every $T\in\MT$ is a line segment
(i.e., $\lambda_A(T)=\gamma_A(T)$ and $\lambda_B(T)=\gamma_B(T)$), 
the algorithm works exactly in the same way as the Malucelli et al.'s algorithm. 

\begin{algorithm}[t!]
  \caption{An algorithm \textsc{SelectTrape}
    to compute the optimal value of a given NTSP instance}
  \label{alg:NTSP}
  \DontPrintSemicolon
  \SetKwInOut{Input}{Input}\SetKwInOut{Output}{Output}
  \Input{An instance $(A,B,\MT,\omega)$, where $A=\{a_1,\dots,a_{n_A}\}$, $B=\{b_1,\dots,b_{n_B}\}$,
    $\MT=\{T_1,\dots,T_z\}$, and $\omega:\MT\rightarrow\mathbb{R}_+$}
  \Output{The max-weight of a subcollection
    $\MS\subseteq \MT$
    such that every $M_s,M_t\in\MS$ $(M_s\ne M_t)$ do not contact each other}
  $\hat{\mu}_\gamma(q)\gets 0$ for all $q\in[n_B]$\;
  \label{line:NTSP.pre}
  $\omega^\ast\gets-\infty$\;
  \For{$i=1$ to $n_A$}{
    \label{line:NTSP.forbegin}
    \For{$T_s\in\MT$ such that $\lambda_A(T_s)=i$}{
      \label{line:NTSP.l_forbegin}
      $p\gets\lambda_B(T_s)$\;
      $\mu_\MT(T_s)\gets \omega(T_s)+\max\{\hat{\mu}_\gamma(q)\mid q\in [p-1]\}$\;
      \label{line:NTSP.muMT}
    }
    \For{$T_t\in\MT$ such that $\gamma_A(T_t)=i$}{
      \label{line:NTSP.inforbegin}
      $q\gets\gamma_B(T_t)$\;
      $\hat{\mu}_\gamma(q)\gets \max\{\hat{\mu}_\gamma(q),\mu_\MT(T_t)\}$\;
      \label{line:NTSP.max}
      \lIf{$\hat{\mu}_\gamma(q)>\omega^\ast$}{$\omega^\ast\gets\hat{\mu}_\gamma(q)$}
    }
    \label{line:NTSP.inforend}
  }
  {\bf output} $\omega^\ast$
\end{algorithm}

\begin{thm}
  \label{thm:NTSP}
  Given an instance $(A,B,\MT,\omega)$ of the NTSP, 
  the algorithm {\sc SelectTrape} in \algref{NTSP}
  computes the optimal weight in $O(z\log n+n)$ time and in $O(z+n)$ space,
  where $z=|\MT|$. 
\end{thm}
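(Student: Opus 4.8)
The plan is to establish correctness and then analyze the running time, treating them as two separate concerns. For correctness, I would argue that the algorithm faithfully implements the dynamic-programming recurrences already proven in \lemref{NTSP.muMT} and \lemref{NTSP.mugamma}, so the essential task is to verify that every quantity is available at the moment it is used. The key invariant to maintain is this: at the start of the iteration for a given value of $i$ in the outer for-loop, the array entry $\hat{\mu}_\gamma(q)$ equals $\mu_\gamma(i-1,q)$ for every $q\in[n_B]$. I would prove this by induction on $i$. The base case $i=1$ follows from the initialization in \lineref{NTSP.pre} together with the convention $\mu_\gamma(0,q)=0$. For the inductive step, one checks that within iteration $i$ the first inner loop (\linesref{NTSP.l_forbegin}{NTSP.muMT}) computes $\mu_\MT(T_s)$ for each trapezoid $T_s$ with $\lambda_A(T_s)=i$ exactly by the formula of \lemref{NTSP.muMT}, because the $\max$ over $\hat{\mu}_\gamma(q)$ for $q\in[p-1]$ is taken against entries that still hold the values $\mu_\gamma(i-1,q)$; the second inner loop (\linesref{NTSP.inforbegin}{NTSP.inforend}) then performs the update of \lemref{NTSP.mugamma}, overwriting $\hat{\mu}_\gamma(q)$ so that it becomes $\mu_\gamma(i,q)$. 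Finally I would confirm that $\omega^\ast$ tracks the running maximum of all $\mu_\MT(T_t)$ values ever computed, which by definition of $\mu_\MT$ equals the optimal weight $\max_{T_s\in\MT}\mu_\MT(T_s)$ (taking the empty solution of weight $0$ into account through the initialization $\omega^\ast\gets-\infty$ and the fact that weights are positive).

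The one subtle point in the correctness argument — and the step I expect to require the most care — is the ordering of the two inner loops and the guarantee that the first loop reads only \emph{old} values. Because a trapezoid $T_s$ with $\lambda_A(T_s)=i$ relies on $\mu_\gamma(\lambda_A(T_s)-1,\cdot)=\mu_\gamma(i-1,\cdot)$, it is essential that the $\hat{\mu}_\gamma$ entries have not yet been updated to their level-$i$ values when \lineref{NTSP.muMT} executes. This is ensured precisely because the first inner loop finishes computing all the $\mu_\MT(T_s)$ quantities before the second inner loop begins writing to $\hat{\mu}_\gamma$. I would state this dependency explicitly and note that it is what makes the single-array space optimization valid: we never need $\mu_\gamma(i',\cdot)$ for $i'<i-1$, so retaining only the current row suffices.

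For the time and space bounds, the plan is to charge the work to a prefix-maximum data structure over the index range $[n_B]$. The only nontrivial operations are, in \lineref{NTSP.muMT}, a prefix-maximum query $\max\{\hat{\mu}_\gamma(q)\mid q\in[p-1]\}$, and in \lineref{NTSP.max}, a point update that raises a single entry $\hat{\mu}_\gamma(q)$. Maintaining $\hat{\mu}_\gamma$ as a segment tree (or a Fenwick-style max tree) built over $[n_B]$ supports each such query and each such update in $O(\log n_B)=O(\log n)$ time, after $O(n_B)=O(n)$ preprocessing to build and initialize the structure. Every trapezoid $T_s\in\MT$ triggers exactly one execution of the first inner loop body (when $i=\lambda_A(T_s)$) and one execution of the second inner loop body (when $i=\gamma_A(T_t)$), so across all $n_A$ iterations the total number of queries and updates is $O(z)$, giving $O(z\log n)$ for all the data-structure operations. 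The comparison maintaining $\omega^\ast$ costs $O(1)$ per trapezoid. Adding the $O(n)$ preprocessing and the $O(n_A)=O(n)$ loop overhead yields the claimed $O(z\log n+n)$ time. The space is dominated by the segment tree of size $O(n_B)$ and the storage of the $\mu_\MT$ values over $\MT$, giving $O(z+n)$, which completes the proof.
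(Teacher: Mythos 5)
Your proof is correct and follows essentially the same route as the paper: the same invariant $\hat{\mu}_\gamma(q)=\mu_\gamma(i-1,q)$ at the start of each outer iteration, proved by induction via \lemref{NTSP.muMT} and \lemref{NTSP.mugamma}, followed by charging one prefix-max query and one point update per trapezoid to a logarithmic-time data structure. The only cosmetic differences are that the paper uses a priority search tree where you use a segment tree, and it explicitly names the bucket sort (costing $O(z+n)$) used to group trapezoids by $\lambda_A$ and $\gamma_A$ values, which your ``$O(n)$ preprocessing plus loop overhead'' accounting subsumes.
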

\begin{proof}
  We show that $\hat{\mu}_\gamma(q)=\mu_\gamma(i-1,q)$ holds
  at the beginning of the outer for-loop with respect to $i$
  (i.e., \lineref{NTSP.forbegin}). 
  When $i=1$, we have $\hat{\mu}_\gamma(q)=0=\mu_\gamma(0,q)$ by
  \lineref{NTSP.pre}.
  In \lineref{NTSP.muMT},
  we see that $\mu_\MT(T_s)$ is computed correctly
  for $T_s\in\MT$ with $\lambda_A(T_s)=i$,
  due to $\hat{\mu}_\gamma(q)=\mu_\gamma(i-1,q)=\mu_\gamma(\lambda_A(T_s)-1,q)$ (by induction)
  and \lemref{NTSP.muMT}.
  The second inner for-loop (i.e., \lineref{NTSP.inforbegin}
  to \ref{line:NTSP.inforend}) computes the maximum
  among $\hat{\mu}_\gamma(q)=\mu_\gamma(i-1,q)$
  and $\mu_\MT(T_s)$ for all $T_s\in\MT$ with $(\gamma_A(T_s),\gamma_B(T_s))=(i,q)$,
  and substitutes the maximum for $\hat{\mu}_\gamma(q)$,
  which is $\mu_\gamma(i,q)$ by \lemref{NTSP.mugamma}. 
  Upon completion of the algorithm,
  we have 
  $\omega^\ast=\max_{q\in[n_B]}\{\hat{\mu}_\gamma(q)\}=\max_{q\in[n_B]}\{\mu_\gamma(n_A,q)\}$,
  which is the optimal value. 
  
  We analyze the computational complexity.
  Each trapezoid is searched as $T_s$ in the first inner for-loop exactly once,
  and as $T_t$ in the second inner for-loop exactly once.
  We can access $T_s\in\MT$ with $\lambda_A(T_s)=i$
  in \lineref{NTSP.l_forbegin}
  (and $T_t\in\MT$ with $\gamma_A(T_t)=i$ in \lineref{NTSP.inforbegin})
  in $O(1)$ time by executing the bucket sort on $\MT$ beforehand.
  The bucket sort runs in $O(z+n)$ time. 
  We use {\em priority search tree}~\cite{M.1985}
  to store $(q,\hat{\mu}_\gamma(q))$ for $q\in[n_B]$,
  by which we can take the maximum in \lineref{NTSP.max} in $O(\log n_B)$ time.
  We see that the algorithm runs in $O(z\log n+n)$ time. 
  We use $O(z+n)$ space to store $\mu_\MT(T_s)$ for $T_s\in\MT$
  and $\hat{\mu}_\gamma(q)$ for $q\in[n_B]$. 
\end{proof}

We explain how to construct an optimal solution $\MS^\ast$. 
For the rightmost trapezoid $T_t$ in $\MS^\ast$,
it holds that $\mu_\MT(T_t)=\sum_{T\in\MS^\ast}\omega(T)=\omega^\ast$.
Similarly, for $\MS=\MS^\ast\setminus\{T_t\}$
and the rightmost trapezoid $T_{t'}$ in $\MS$,
since $\MS$ is a max-weighted feasible solution among those having $T_{t'}$ as the rightmost trapezoid,
it holds that $\mu_\MT(T_{t'})=\sum_{T\in\MS}\omega(T)=\omega^\ast-\omega(T_t)$.
Note that, since $T_{t'}\prec T_t$, $\gamma_A(T_{t'})<\lambda_A(T_t)$ and $\gamma_B(T_{t'})<\lambda_B(T_t)$ should hold. 
Then we can construct $\MS^\ast$ by \algref{NTSPconst} after running \algref{NTSP}. 
The running time is $O(z+n)$.

\begin{algorithm}[t!]
  \caption{An algorithm to construct an optimal solution of a given NTSP instance}
  \label{alg:NTSPconst}
  \DontPrintSemicolon
  \SetKwInOut{Input}{Input}\SetKwInOut{Output}{Output}
  \Input{An instance $(A,B,\MT,\omega)$, where $A=\{a_1,\dots,a_{n_A}\}$, $B=\{b_1,\dots,b_{n_B}\}$,
    $\MT=\{T_1,\dots,T_z\}$, and $\omega:\MT\rightarrow\mathbb{R}_+$,
    the optimal weight $\omega^\ast$ of the instance, and
    a function $\mu_\MT:\MT\rightarrow\mathbb{R}_+$ the satisfies \eqref{NTSP.muMT}
  }
  \Output{An optimal solution}
  $\MS\gets\emptyset$;
  $\alpha\gets\omega^\ast$;
  $i\gets n_A$;
  $q\gets n_B$\;
  \While{$\alpha>0$}{
    \For{$T_t\in\MT$ such that $\gamma_A(T_t)=i$ and $\gamma_B(T_t)\le q$}{
      \If{$\mu_\MT(T_t)=\alpha$}{
        $\MS\gets\MS\cup\{T_t\}$;
        $\alpha\gets\alpha-\omega(T_t)$;
        $i\gets\lambda_A(T_t)$;
        $q\gets\lambda_B(T_t)-1$\;
        {\bf break} the for-loop\;
      }
    }
    $i\gets i-1$\;
  }
  {\bf output} $\MS$\;
\end{algorithm}

\invis{
\begin{itemize}
\item For $q\in[n_B]$, let $\MS^\ast_q$ denote a max-weighted feasible solution
  such that all trapezoids in $\MS^\ast_q$ are contained in $ A\times B[q]$
  and that the lower-right corner of the rightmost one, say $T_t$, is $q$ (i.e., $\gamma_B(T_t)=q$). 
  We compute such $T_t$ of $\MS^\ast_q$ for every $q\in[n_B]$ and maintain it as $\nu_\gamma(q)$.
  The $\nu_\gamma(q)$ can be obtained as follows.
  We initialize $\nu_\gamma(q)\gets\Null$ for all $q\in[n_B]$. 
  In \lineref{NTSP.max},
  if $\hat{\mu}_\gamma(q)<\mu_\MT(T_t)$ holds for the current $\hat{\mu}_\gamma(q)$,
  then we update $\nu_\gamma(q)\gets T_t$. 
\item For $T_s\in\MT$, let $\MS^\ast(T_s)$ denote a max-weighted feasible solution
  such that $T_s$ is the rightmost trapezoid,
  and  $T_{s'}$ denote the trapezoid that is to the immediate left of $T_s$ in $\MS^\ast(T_s)$.
  We maintain the index $\gamma_B(T_{s'})$ as $\nu_\MT(T_s)$.
  To obtain $\nu_\MT(T_s)$,
  we first initialize $\nu_\MT(T_s)\gets\Null$ for all $T_s\in\MT$
  and update $\nu_\MT(T_s)\gets q$ in \lineref{NTSP.muMT},
  where $q$ is a maximizer of $\max\{\hat{\mu}_\gamma(q)\mid q\in [p-1]\}$. 
\end{itemize}
Let $q^\ast$ denote a maximizer of $\hat{\mu}_\gamma(q^\ast)$. 
Then $T=\nu_\gamma(q^\ast)$ is the rightmost trapezoid
in the optimal solution. 
Setting $\MS\gets\emptyset$ and $T\gets\nu_\gamma(q^\ast)$,
while $T\ne\Null$,
we repeat $\MS\gets\MS\cup\{T\}$, $q\gets\nu_\MT(T)$ and $T\gets\nu_\gamma(q)$.
The obtained $\MS$ is the optimal solution. 
The procedure takes $O(z+\log n)$ extra time. 
}

\section{Algorithm for the MW-$c$-CPEMP}
\label{sec:cpemp}

In this section, we reduce the MW-$c$-CPEMP to the NTSP,
which yields polynomial-time algorithms for the cases of $c=1$ and 2.
Throughout this section, we assume that an instance $(G,w,\MX)$ of the MW-$c$-CPEMP is given
for a non-negative constant $c$.
We also assume that $\MX$ is expressed by
a list of crossing pairs.

  Before proceeding, let us mention that the problem is solvable in $O^\ast(2^k)$-time as follows,
  where $k=|\MX|$ and $O^\ast(\cdot)$ is introduced to ignore polynomial factors;
  For each $\MY\subseteq\MX$, let $M_\MY=\{e\in E\mid \exists Y\in\MY,\ e\in Y\}$.
  In other words, $M_\MY$ is the set of edges that appear in $\MY$. 
  We check whether $M_\MY$ is a $c$-CPE matching,
  which can be done in polynomial time.
  If it is the case, we compute a max-weighted non-crossing matching, say $M'_\MY$, on the subgraph that
  is obtained by removing $M_\MY$ and the extreme points from $G$.
  The $M'_\MY$ can be computed in polynomial time
  by using the Malucelli et al's algorithm~\cite{MOP.1993}. 
  The max-weighted $c$-CPE matching $M_\MY\cup M'_\MY$ over $\MY\subseteq\MX$ is an optimal solution.

\subsection{Notations}
For $V'\subseteq V$,
we define $\lambda_A(V')$ to be the smallest index
among the vertices in $A\cap V'$.
For convenience, we let $\lambda_A(V')$ be zero when $A\cap V'=\emptyset$.
That is, 
\begin{align*}
  \lambda_A(V')\triangleq
  \left\{
  \begin{array}{ll}
    \min_{a_i\in A\cap V'}\{i\} & \textrm{if\ }A\cap V'\ne\emptyset,\\
    0 &\textrm{otherwise.}
  \end{array}
  \right.
\end{align*}
Similarly, we define $\gamma_A(V')$ to be the largest index
among the vertices in $A\cap V'$, and for convenience,
it is set to $n_A+1$ if $A\cap V'=\emptyset$.
That is,
\begin{align*}
  \gamma_A(V')\triangleq\left\{
  \begin{array}{ll}
    \max_{a_j\in A\cap V'}\{j\} & \textrm{if\ }A\cap V'\ne\emptyset,\\
    n_A+1 & \textrm{otherwise.}
  \end{array}
  \right.
\end{align*}
Observe that, when $A\cap V'\ne\emptyset$, $\lambda_A(V')$ and $\gamma_A(V')$ represent
the indices of the leftmost and rightmost vertices in $A\cap V'$, respectively. 
We define $\lambda_B(V')$ and $\gamma_B(V')$ in the analogous way.

For an edge subset $E'\subseteq E$,
we define $V[E']$ to be the set of extreme points of edges in $E'$.
For simplicity,
We write $\lambda_A(V[E'])$ by $\lambda_A(E')$.
The notations $\gamma_A(E')$, $\lambda_B(E')$ and $\gamma_B(E')$ are analogous. 
When $E'$ is a singleton, that is,
$E'=\{e\}$ for some edge $e\in E$,
we write $\lambda_A(\{e\})$ as $\lambda_A(e)$.
In this case, it holds that $\lambda_A(e)=\gamma_A(e)$ and
$\lambda_B(e)=\gamma_B(e)$. 
We write an inequality $\lambda_A(e)\le\lambda_A(e')$
as $e\le_Ae'$. 
Using this notation, $e$ and $e'$ intersect
if ($e<_Ae'$ and $e'<_Be$) or ($e'<_Ae$ and $e<_Be'$). 

Recall that $\MXall$ denotes the set of all possible crossing pairs in $G$. 
Observe that each $X\in\MXall$
is a matching that consists of two intersecting edges.
We may write $X=\{e,e'\}$ in $\MXall$ as an ordered pair $X=(e,e')$
when we assume $e<_Ae'$ (and thus $e'<_Be$). 
For a matching $M\subseteq E$,
we define $\MXall[M]$ to be a set of crossing pairs
that appear in $M$, that is,
\[
\MXall[M]\triangleq\big\{\{e,e'\}\in \MXall\mid e,e'\in M\big\}.
\]
A matching $M$ is {\em at-most-$c$-crossings-per-edge} ({\em $c$-CPE})
if $\MXall[M]\subseteq\MX$ holds
and each $e\in M$ appears in at most $c$ crossing pairs in $\MXall[M]$.
Hence, $M$ is 0-CPE iff $\MXall[M]=\emptyset$.

\subsection{Overview}
Let $\MM$ denote the family of all matchings in $G$. 
We regard any $M\in\MM$ 
as a trapezoid $T_M$ that has
$a_i$ with $i=\lambda_A(M)$ as the upper-left corner,
$a_j$ with $j=\gamma_A(M)$ as the upper-right corner,
$b_p$ with $p=\lambda_B(M)$ as the lower-left corner, and
$b_q$ with $q=\gamma_B(M)$ as the lower-right corner,
and that has the weight $\omega(T_M)=\sum_{e\in M}w(e)$. 
Then $(\MM,\prec)$ is a poset, 
where $\prec$ is the partial order on a trapezoid collection
that we introduced in \secref{ntsp.poset}.

\begin{lem}
  \label{lem:either}
  For 
  a $2$-layered bipartite graph $G$, 
  let $M\in\MM$. 
  For any $e,e'\in M$, exactly one of the following holds:
  \begin{description}
  \item[(i)] $\{e,e'\}\in\MXall$; and
  \item[(ii)] $e$ and $e'$ are comparable.
  \end{description}
\end{lem}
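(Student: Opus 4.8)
The plan is to unwind the definitions of ``crossing'' and of the partial order $\prec$ into plain inequalities on the single index of each endpoint, and then to run a short case analysis that is controlled entirely by the matching hypothesis. Since for a single edge $\lambda_A(e)=\gamma_A(e)$ and $\lambda_B(e)=\gamma_B(e)$, both the crossing condition and the relation $\prec$ reduce to comparisons between the two upper indices and between the two lower indices.

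First I would fix two distinct edges $e,e'\in M$ and record the two consequences of $M$ being a matching: since $e$ and $e'$ share no upper endpoint, $\lambda_A(e)\ne\lambda_A(e')$, and since they share no lower endpoint, $\lambda_B(e)\ne\lambda_B(e')$. Because the two upper indices are distinct, I may assume without loss of generality that $e<_Ae'$; the opposite case $e'<_Ae$ is handled by interchanging the names of $e$ and $e'$, and neither (i) nor (ii) distinguishes an ordered pair from its reverse. With $e<_Ae'$ fixed, I split on the order of the two lower endpoints, which is strict because $\lambda_B(e)\ne\lambda_B(e')$. If $e<_Be'$, then $\lambda_A(e)<\lambda_A(e')$ and $\lambda_B(e)<\lambda_B(e')$, which is exactly the definition of $e\prec e'$, so $e$ and $e'$ are comparable and (ii) holds. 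If instead $e'<_Be$, then $e<_Ae'$ and $e'<_Be$, which is precisely the definition of $e$ and $e'$ intersecting, so $\{e,e'\}\in\MXall$ and (i) holds. As exactly one of $e<_Be'$ and $e'<_Be$ occurs, at least one of (i), (ii) holds.

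Finally I would verify mutual exclusivity so that ``exactly one'' is justified: a crossing forces the upper and lower orders to be reversed, whereas comparability ($e\prec e'$ or $e'\prec e$) forces them to agree, and these are incompatible, so (i) and (ii) cannot hold simultaneously. I do not anticipate a genuine obstacle here, since the argument is a definition chase; the only point demanding care is the role of the matching hypothesis, which is exactly what rules out the equalities $\lambda_A(e)=\lambda_A(e')$ and $\lambda_B(e)=\lambda_B(e')$. Such an equality would leave a pair neither crossing nor comparable and would thereby break the ``at least one'' half of the claim, so I would make sure to invoke the matching property explicitly at the start rather than treat it as implicit.
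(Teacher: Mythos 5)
Your proposal is correct and takes essentially the same route as the paper: the matching hypothesis rules out shared endpoints, and then the relative order of the lower endpoints (given a fixed order of the upper ones) yields exactly the dichotomy between $\{e,e'\}\in\MXall$ and comparability, with your explicit exclusivity check being a harmless addition. The only difference is that the paper also disposes of the degenerate case $e=e'$ (declaring (ii) to hold there), which your proof omits by fixing distinct edges; since the lemma's quantifier formally allows $e=e'$, a one-line remark covering it would make your argument match the statement exactly.
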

\begin{proof}
  When $e=e'$, (ii) holds. Suppose that $e\ne e'$. 
  If $e$ and $e'$ intersect, then we have $\{e,e'\}\in\MXall$. 
  Otherwise, either $e\prec e'$ or $e'\prec e$ should hold
  as they do not share endpoints in common. 
\end{proof}

We introduce an auxiliary graph, which we denote by $H=(E,\MXall)$.
We call an edge in the underlying graph $G$ a {\em node}
when we use it in the context of $H$.
For $M\in\MM$, 
we denote by $H[M]=(M,\MXall[M])$ the subgraph induced by $M$.
Let $\MC_M$ denote the family 
of connected components in $H[M]$. 

\begin{lem}
  \label{lem:comp}
  For 
  a $2$-layered bipartite graph $G$, 
  let $M\in\MM$. Then $(\MC_M,\prec)$ is a chain. 
\end{lem}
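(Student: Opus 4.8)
The plan is to recognize that $H[M]$ is exactly the incomparability graph of the poset $(M,\prec)$: by \lemref{either}, two distinct edges $e,e'\in M$ are joined in $H[M]$ (i.e.\ $\{e,e'\}\in\MXall[M]$) if and only if they are incomparable under $\prec$. Consequently, two edges lying in different connected components $C,C'\in\MC_M$ can never form a crossing pair and are therefore comparable. The whole task is then to promote this edgewise comparability to the trapezoid order on the components, which really amounts to showing that all these comparabilities point in the same direction.

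First I would fix distinct components $C,C'$ together with witnesses $e_0\in C$ and $e_0'\in C'$; since they are comparable, I may assume without loss of generality $e_0\prec e_0'$. The central claim is that then $e\prec e'$ holds for \emph{every} $e\in C$ and \emph{every} $e'\in C'$. I would establish this by a two-stage induction along paths in $H[M]$. Keeping $e_0'$ fixed, I walk along any $H[M]$-path $e_0=f_0,f_1,\dots,f_k=e$ inside $C$; at each step $f_j$ and $f_{j+1}$ are incomparable (a crossing pair), whereas $f_{j+1}$ and $e_0'$ lie in different components and hence are comparable. Were $e_0'\prec f_{j+1}$ to hold, transitivity with the inductive hypothesis $f_j\prec e_0'$ would force $f_j\prec f_{j+1}$, contradicting their incomparability; thus $f_{j+1}\prec e_0'$. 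This yields $e\prec e_0'$ for all $e\in C$. Running the symmetric argument with $e\in C$ fixed and a path inside $C'$ then gives $e\prec e'$ for all $e'\in C'$.

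Finally I would translate this uniform relation into the trapezoid order of \secref{ntsp.poset}. From $e\prec e'$ for all $e\in C$ and $e'\in C'$ I read off $\gamma_A(e)<\lambda_A(e')$ and $\gamma_B(e)<\lambda_B(e')$ for every such pair; taking the maximum over $C$ and the minimum over $C'$ gives $\gamma_A(C)<\lambda_A(C')$ and $\gamma_B(C)<\lambda_B(C')$, that is, $C\prec C'$. The opposite initial choice $e_0'\prec e_0$ symmetrically produces $C'\prec C$. Since $e_0,e_0'$ are comparable, one of the two cases always occurs, so any two distinct components of $\MC_M$ are comparable, and $(\MC_M,\prec)$ is a chain.

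I expect the heart of the argument, and its only delicate point, to be the uniformity claim of the middle paragraph: edgewise comparability across components is immediate from \lemref{either}, but proving that every cross-component pair is oriented the same way is precisely where connectivity within each component and transitivity of $\prec$ must be combined, and it is the one place the proof could break down if the connecting paths were not handled carefully.
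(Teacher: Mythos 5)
Your proof is correct and takes essentially the same route as the paper's: both rely on \lemref{either} to get comparability of every cross-component pair and on connectivity of the components of $H[M]$ to force all those comparabilities into a single orientation, which then passes to the trapezoid order by taking maxima and minima of corner indices. Your explicit transitivity induction along $H[M]$-paths is just a more careful rendering of the paper's contradiction step (that a path in $Y$ from $e_Y$ to an $e'_Y \prec e_X$ would have to contain an edge intersecting $e_X$), and it also makes explicit the paper's terse passage from a fixed witness pair to all pairs.
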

\begin{proof}
  We show that any $X,Y\in\MC_M$ $(X\ne Y)$ are comparable. 
  %
  The $X$ and $Y$ are node sets of connected components of $H[M]$.
  For any $e_X\in X$ and $e_Y\in Y$, $e_X$ and $e_Y$ are not adjacent. 
  Then $\{e_X,e_Y\}\notin\MXall$ holds. 
  By \lemref{either}, $e_X$ and $e_Y$ are comparable.
  We assume that $e_X\prec e_Y$ without loss of generality.
  Suppose that there is $e'_Y\in Y$
  such that $e'_Y\prec e_X$.
  Since $H[Y]$ is connected, 
  there is a path between $e_Y$ and $e'_Y$.
  The path should contain an edge $e''_Y$
  that intersects with $e_X$ in $G$.
  Then $\{e_X,e''_Y\}\in\MXall$, which contradicts that $e_X$ and $e''_Y$ are not adjacent in $H$.  
  We see that $e_X\prec e_{Y'}$ holds
  for any $e_X\in X$ and $e_{Y'}\in Y$
  and thus $X\prec Y$ holds.  
\end{proof}

Let $\MM_c\subseteq\MM$ denote the family of all $c$-CPE matchings. 
For a $c$-CPE matching $M\in\MM_c$,
it holds that $\MXall[M]\subseteq\MX$
and the degree of any node in $H[M]$ is at most $c$. 
We call $M$ {\em connected\/} if $H[M]$ is connected. 
By \lemref{comp}, any $M\in\MM_c$ is partitioned into connected
$c$-CPE matchings.

If we are given a family $\MT$ of all connected
$c$-CPE matchings
(which are regarded as trapezoids),
then we can find a max-weighted $c$-CPE matching
by solving the corresponding NTSP. 
The time complexity of this algorithm is $O(\Gamma+|\MT|\log n+n)$ by \thmref{NTSP},
where $\Gamma$ denotes the time for constructing $\MT$.  
Then, if $\Gamma$ and $|\MT|$ are polynomially bounded,
the total running time of the algorithm is also polynomially bounded. 

Let us consider how to construct $\MT$.
For $M\in\MM_c$,
the degree of any node in $H[M]$ is at most $c$.
Then, when $c=0$, we have $\MT=E$, that is,
the collection of isolated nodes in $H$.
Then $|\MT|=m$ holds. 
When $c=1$, we have $\MT=E\cup\MX$, and thus $|\MT|=m+k$ holds.
For $c\in\{0,1\}$, 
the following theorems are immediate. 

\begin{thm}[Malucelli et al.~\cite{MOP.1993}]
  \label{thm:CPEMP.0}
  Given an instance $(G,w)$ of the MW-$0$-CPEMP,
  we can find a max-weighted $0$-CPE matching
  in $O(m\log n)$ time
  and in $O(m+n)$ space. 
\end{thm}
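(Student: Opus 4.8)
The plan is to obtain this theorem as an immediate corollary of the NTSP framework by specializing the reduction of this section to $c=0$. First I would observe that, for a $0$-CPE matching $M$, the constraint $\MXall[M]=\emptyset$ forces $H[M]$ to have no edges, so every connected component of $H[M]$ is a single node. Hence the family of connected $0$-CPE matchings is exactly $E$ itself, and the trapezoid collection is $\MT=E$, where each edge $e=a_ib_q$ is identified with the degenerate trapezoid (line segment) having $\lambda_A(e)=\gamma_A(e)=i$, $\lambda_B(e)=\gamma_B(e)=q$, and weight $\omega(T_e)=w(e)$. This gives $z=|\MT|=m$, matching the count asserted in the text.

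Next I would verify that the reduction is feasibility- and weight-preserving, i.e.\ that a subcollection $\MC\subseteq E$ is a chain in $(\MT,\prec)$ if and only if $\MC$ is a non-crossing (equivalently, $0$-CPE) matching of $G$. For the forward direction, $e\prec e'$ for two line segments $e=a_ib_q$ and $e'=a_{i'}b_{q'}$ means $i<i'$ and $q<q'$, so the two edges neither share an endpoint nor cross; thus a chain is a non-crossing matching. For the converse, \lemref{either} shows that any two edges of a matching either cross or are comparable, and for a $0$-CPE matching no pair crosses, so all pairs are comparable and the matching is a chain --- this is exactly \lemref{comp} specialized to $c=0$. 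Since $\omega(T_e)=w(e)$, the weight of a chain equals the weight of the corresponding matching, so a max-weighted chain is precisely a max-weighted $0$-CPE matching.

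With the reduction in place, I would invoke \thmref{NTSP}: running \textsc{SelectTrape} (\algref{NTSP}) on $(A,B,\MT,\omega)$ computes the optimal weight in $O(z\log n+n)=O(m\log n+n)$ time and $O(z+n)=O(m+n)$ space, and \algref{NTSPconst} recovers an optimal matching in $O(m+n)$ additional time. (The paper already notes that on line-segment inputs \textsc{SelectTrape} reduces to the Malucelli et al.\ algorithm, so no extra verification of the algorithm itself is needed.) To match the stated time bound of $O(m\log n)$, I would discard isolated vertices of $G$ in a linear-time preprocessing step; since no isolated vertex can lie on a matching edge, this is harmless, and afterwards $n\le 2m$, so the additive $n$ is absorbed into $m\log n$ in the time bound, while the space bound retains the $O(m+n)$ form.

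There is no genuinely hard step here; the entire content is the correctness of the reduction, which is delivered by \lemref{either} and \lemref{comp}. The only points demanding a little care are the bookkeeping that turns $O(m\log n+n)$ into $O(m\log n)$ --- namely the observation that isolated vertices may be removed so that $n=O(m)$ --- together with the (already-established) fact that \textsc{SelectTrape} handles degenerate trapezoids correctly.
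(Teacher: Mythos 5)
Your core reduction is exactly the paper's: for $c=0$ the auxiliary graph $H[M]$ of a $0$-CPE matching has no edges, so every connected component is a single node, hence $\MT=E$ with each edge a degenerate trapezoid, and chains in $(\MT,\prec)$ correspond precisely to non-crossing matchings (via \lemref{either} and \lemref{comp}); invoking \thmref{NTSP} then gives $O(m\log n+n)$ time and $O(m+n)$ space. Up to that point your argument is correct and coincides with the paper, which indeed presents the theorem as ``immediate'' from $\MT=E$ (and in any case cites Malucelli et al.\ for it).

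The gap is in your last bookkeeping step, the only place where you diverge from the paper. You discard isolated vertices ``in a linear-time preprocessing step'' and then claim the additive $n$ is absorbed because afterwards $n\le 2m$. But that preprocessing must itself examine all $n$ vertices, so it costs $\Omega(n)$ time; the total running time is therefore still $O(m\log n+n)$ --- you have merely moved the $O(n)$ term from \textsc{SelectTrape} into the preprocessing. One cannot shrink an instance in time proportional to its original size and then charge only the shrunken size. The paper's reconciliation is different and avoids this circularity: it observes that the additive $O(n)$ in \thmref{NTSP} is the cost of the bucket sort used to group trapezoids by their left and right $A$-indices, and that when $c=0$ each trapezoid is an edge, so an adjacency-list representation of $G$ already provides exactly that grouping; the bucket sort, and with it the extra term attributed to it, is simply never executed. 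If you want to salvage your vertex-deletion idea, you would have to assume the graph is given as a bare edge list with no explicit vertex set, and compress the at most $2m$ occurring indices in $O(m\log m)\subseteq O(m\log n)$ time; as written, with $G=(A,B,E)$ given explicitly, your step does not deliver the claimed $O(m\log n)$ bound.
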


\begin{thm}
  \label{thm:CPEMP.1}
  Given an instance $(G,w,\MX)$ of the MW-$1$-CPEMP,
  we can find a max-weighted $1$-CPE matching
  in $O\big((k+m)\log n+n\big)$ time
  and in $O(k+m+n)$ space. 
\end{thm}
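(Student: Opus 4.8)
The plan is to instantiate, for $c=1$, the general reduction to the NTSP that the preceding discussion sets up. The crux is to identify precisely the family $\MT$ of connected $1$-CPE matchings. First I would observe that in the auxiliary graph $H=(E,\MXall)$, a connected $1$-CPE matching $M$ induces a connected subgraph $H[M]$ in which every node has degree at most one; the only connected graphs with maximum degree one are a single node or a single edge ($K_2$). Hence each connected $1$-CPE matching is either a single edge $e\in E$ (an isolated node of $H$) or an admissible crossing pair $\{e,e'\}\in\MX$ (a $K_2$ whose unique crossing lies in $\MX$). This yields $\MT=E\cup\MX$, and therefore $|\MT|=m+k$.

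Next I would verify the two directions of the correspondence between $1$-CPE matchings and feasible NTSP solutions, so that solving the NTSP on $\MT$ is equivalent to solving the MW-$1$-CPEMP. For the forward direction, \lemref{comp} shows that the connected components $\MC_M$ of any $M\in\MM_1$ form a chain under $\prec$, i.e., they pairwise do not contact, so $M$ maps to a feasible NTSP solution of weight $\sum_{e\in M}w(e)=w(M)$. For the backward direction I would take any chain $\MS\subseteq\MT$ and argue that its union $M=\bigcup_{T\in\MS}T$ is a $1$-CPE matching: if $T_s\prec T_t$ then $\gamma_A(T_s)<\lambda_A(T_t)$ and $\gamma_B(T_s)<\lambda_B(T_t)$ place all vertices of $T_s$ strictly to the left of those of $T_t$ on both lines, so distinct trapezoids in $\MS$ share no endpoint and contribute no crossing between them. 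Consequently $\MXall[M]$ is the disjoint union of the internal crossings of the selected trapezoids; each such crossing is admissible and every node keeps degree at most one, so $\MXall[M]\subseteq\MX$ and $M$ is $1$-CPE, with matching weights.

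With the equivalence in hand, the construction and complexity are routine. I would build $\MT=E\cup\MX$ and convert each member to a trapezoid: a single edge $e=a_ib_q$ becomes the degenerate trapezoid whose four corner indices are determined by $e$, and an admissible pair $(e,e')$ with $e<_Ae'$ (hence $e'<_Be$) becomes the trapezoid with $\lambda_A=\lambda_A(e)$, $\gamma_A=\gamma_A(e')$, $\lambda_B=\lambda_B(e')$, $\gamma_B=\gamma_B(e)$, and weight $w(e)+w(e')$. Each conversion costs $O(1)$, so building $\MT$ takes $O(m+k)$ time. Then I would run \textsc{SelectTrape} (\thmref{NTSP}) on the $z=m+k$ trapezoids to obtain the optimal weight in $O(z\log n+n)=O((k+m)\log n+n)$ time and $O(z+n)=O(k+m+n)$ space, and recover an optimal matching with \algref{NTSPconst} in $O(z+n)$ time. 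Summing yields the claimed $O((k+m)\log n+n)$ time and $O(k+m+n)$ space.

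On the main obstacle: the only genuinely substantive point is the characterization $\MT=E\cup\MX$ together with the backward direction of the correspondence; everything after that is a direct appeal to \thmref{NTSP}. In particular I must make sure the separation $\gamma_A(T_s)<\lambda_A(T_t)$ and $\gamma_B(T_s)<\lambda_B(T_t)$ truly precludes any new crossing between edges of $T_s$ and $T_t$, so that selecting a chain of connected pieces never creates a node of degree exceeding one nor an inadmissible crossing. That is where I would spend the care; the remainder is bookkeeping.
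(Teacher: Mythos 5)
Your proposal is correct and takes essentially the same approach as the paper: the paper treats \thmref{CPEMP.1} as ``immediate'' from the observation that for $c=1$ every connected component of $H[M]$ has maximum degree one, hence $\MT=E\cup\MX$ with $|\MT|=m+k$, followed by a direct application of \thmref{NTSP}. Your write-up simply makes explicit the two-way correspondence (via \lemref{comp} and the separation argument) and the $O(1)$-per-element trapezoid construction that the paper leaves implicit.
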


For \thmref{CPEMP.0},
the complexity would be $O(m\log n+n)$
if we apply \thmref{NTSP} directly to the analysis.
However, the second term $O(n)$ can be dropped
since we do not need to execute the bucket sort before running \algref{NTSP}.
When $c=0$, each trapezoid is an edge itself.
The bucket sort is not needed on condition that
the graph is represented by an adjacency list.

\subsection{Trapezoid Collection for $c=2$}
\label{sec:prop}

For a connected 2-CPE matching $M$,
the auxiliary graph $H[M]$ is an isolated point,
an edge, a cycle or a path, and
$\MXall[M]\subseteq\MX$ should hold.  
The next lemma tells that, when it is a cycle, the length (which is $|M|$)
is at most four. We call a cycle an {\em $\ell$-cycle} if the length is $\ell$. 

\begin{lem}
  \label{lem:cycle}
  For a $2$-layered bipartite graph $G$
  and an admissible set $\MX$,
  let $M$ be a $2$-CPE matching.
  If $H[M]$ is a cycle,
  then $|M|$ is at most four. 
\end{lem}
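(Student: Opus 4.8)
The plan is to identify each node $e\in M$ with the planar point $(\lambda_A(e),\lambda_B(e))$, so that by \lemref{either} two distinct nodes $e,e'$ of $M$ either cross — meaning the $A$-order and the $B$-order disagree, i.e.\ ($e<_Ae'$ and $e'<_Be$) or ($e'<_Ae$ and $e<_Be'$) — or are comparable, meaning the two orders agree. Under the hypothesis that $H[M]$ is a cycle, say $C_\ell$ with $\ell=|M|$, every node crosses exactly its two cycle neighbours and is comparable to each remaining node of $M$; thus $C_\ell$ is chordless. I would assume $\ell\ge 5$ and derive a contradiction, forcing $\ell\le 4$.

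First I would orient each crossing pair $\{e,e'\}$ from its $<_A$-smaller node to its $<_A$-larger node and check that this orientation of $C_\ell$ is transitive: if $e\to e'$ and $e'\to e''$ then $e<_Ae'<_Ae''$ while the $B$-order reverses, so $\lambda_B(e)>\lambda_B(e')>\lambda_B(e'')$; hence $e$ and $e''$ again cross and are oriented $e\to e''$. Since $C_\ell$ is chordless, two nodes at cyclic distance two are non-adjacent, so a transitive orientation can have no directed path of length two; therefore every node is a source or a sink, and sources and sinks alternate around the cycle. This already forces $\ell$ to be even, which disposes of every odd $\ell\ge 5$.

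To rule out even $\ell\ge 6$, set $k=\ell/2\ge 3$. The $k$ sources are pairwise non-adjacent, hence pairwise comparable by \lemref{either}, so they form a chain under $\prec$; likewise the sinks form a chain, which I would list as $t^{(1)}\prec\cdots\prec t^{(k)}$, so that both $\lambda_A$ and $\lambda_B$ increase along it. The key step is an \emph{interval property}: for any fixed node $s$, the indices $j$ with $s$ crossing $t^{(j)}$ form a contiguous block. Indeed, ``$s$ crosses $t^{(j)}$'' says $\lambda_A(s)-\lambda_A(t^{(j)})$ and $\lambda_B(s)-\lambda_B(t^{(j)})$ have opposite signs, and along the chain both differences are monotone in $j$, so their signs disagree exactly on an interval. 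Applied to each source $s$, whose crossing partners are precisely its two neighbouring sinks, this shows that the two sinks adjacent to $s$ are consecutive in the sink chain. Now count: each of the $k$ sources selects one of the $k-1$ consecutive pairs $\{t^{(j)},t^{(j+1)}\}$, and two sources cannot select the same pair, for if $s,s'$ both did then $t^{(j)}$ and $t^{(j+1)}$ would have $\{s,s'\}$ as their only neighbours, making $\{s,t^{(j)},s',t^{(j+1)}\}$ a $4$-cycle component — impossible in a connected $C_\ell$ with $\ell\ge 6$. Thus $k$ distinct pairs must be chosen from $k-1$ available ones, a contradiction, and so $\ell\le 4$.

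I expect the interval property together with this final pigeonhole to be the main obstacle: the transitive-orientation argument is a short warm-up that yields only parity, whereas the genuine geometric content — that a node crosses only a contiguous block of any $\prec$-chain, and that the resulting cyclic adjacency pattern is unrealizable for $k\ge 3$ — is what actually caps the cycle length at four. The delicate point within it is justifying, via connectivity of $C_\ell$, that distinct sources must use distinct consecutive sink-pairs, since otherwise the bound would spuriously also exclude the genuine case $\ell=4$.
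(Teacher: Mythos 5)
Your proof is correct, but it follows a genuinely different route from the paper's. The paper argues locally and constructively: it takes the leftmost edge $e_1$ (with respect to $<_A$), names its two cycle-neighbours $e_2,e_3$, and splits on whether $e_2$ and $e_3$ cross — if they do, the cycle is the triangle $\{e_1,e_2,e_3\}$; if not, the order relations force a unique fourth edge $e_4$ that closes a $4$-cycle, and the derived inequalities pin down the exact geometric configurations shown in \figref{cycle}. Those explicit configurations are then reused in \lemref{enumcycle} to enumerate all $3$- and $4$-cycles by pattern-checking. Your argument is instead global and structural: the transitive orientation of crossing pairs shows that any chordless cycle in the crossing graph must alternate sources and sinks (killing odd length $\ge5$), and the chain-plus-interval-plus-pigeonhole argument kills even length $\ge6$. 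In effect you reprove the standard fact that crossing graphs of $2$-layered drawings are permutation graphs (both the crossing relation and its complement $\prec$ are transitively orientable) and that permutation graphs contain no induced cycles of length five or more; the matching hypothesis enters only to guarantee the strict inequalities via \lemref{either}. What each approach buys: the paper's case analysis is shorter and delivers the concrete $3$- and $4$-cycle shapes needed by the subsequent enumeration algorithm, whereas yours is a cleaner conceptual explanation of \emph{why} the bound is four — it is a statement about holes in permutation graphs rather than about this particular leftmost-edge bootstrap — at the cost of a longer argument (alternation, interval property, the $4$-cycle-component exclusion) whose delicate steps you correctly identified and handled.
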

\begin{proof}
  Each edge in $M$ intersects with exactly two other edges in $M$.
  Hence $|M|\ge3$.
  Let $M=\{e_1,\dots,e_d\}$ $(d\ge3)$. 
  Without loss of generality,
  we suppose that $e_{1}<_Ae_{t}$ holds for all $t\in[2,d]$,
  that $e_2$ and $e_3$ are two edges that intersect with $e_1$,
  and that $e_2<_Ae_3$ holds. 

  As shown in \figref{cycle}~(a),
  if $e_2$ and $e_3$ intersect, then
  we see that any of $\{e_1,e_2,e_3\}$ intersects with two others.
  Since $H[M]$ is a cycle,
  $M=\{e_1,e_2,e_3\}$ holds. 
  Otherwise (i.e., if $e_2$ and $e_3$ do not intersect),
  as shown in \figref{cycle}~(b),
  we have $e_1<_Ae_2<_Ae_3$ and $e_2<_Be_3<_Be_1$. 
  Since $H[M]$ is a cycle, 
  the edge $e_2$ intersects with another edge in $M$, say $e_4$.
  From the definition of $e_1$, we have $e_1<_Ae_4$. 
  Since $e_1$ and $e_4$ should not intersect,
  $e_2<_Be_3<_Be_1<_Be_4$ holds.
  Since $e_2$ and $e_4$ intersect,
  $e_1<_Ae_4<_Ae_2<_Ae_3$ holds.
  We see that $e_3$ intersects with $e_1$ and $e_4$,
  and $e_4$ intersects with $e_2$ and $e_3$.
  Any of $\{e_1,\dots,e_4\}$ intersects with two others,
  and thus we have $M=\{e_1,\dots,e_4\}$. 
\end{proof}

\begin{figure}[t!]
  \centering
  \begin{tabular}{clc}
    \begin{tikzpicture}
      \node [circle,draw,fill=white] (i1) at (0,2) {}; 
      \node [circle,draw,fill=white] (i2) at (1.5,2) {}; 
      \node [circle,draw,fill=white] (i3) at (3,2) {}; 
      \node [circle,draw,fill=white] (q3) at (0,0) {}; 
      \node [circle,draw,fill=white] (q2) at (2,0) {}; 
      \node [circle,draw,fill=white] (q1) at (3,0) {}; 
      \draw (i1) -- (q1) node [left,pos=0.25] {$e_1$};
      \draw (i2) -- (q2) node [left,pos=0.25] {$e_2$};
      \draw (i3) -- (q3) node [right,pos=0.25] {$e_3$};
    \end{tikzpicture}
    &\ \ \ &
    \begin{tikzpicture}
      \node [circle,draw,fill=white] (i1) at (0,2) {}; 
      \node [circle,draw,fill=white] (i4) at (1,2) {}; 
      \node [circle,draw,fill=white] (i2) at (2,2) {}; 
      \node [circle,draw,fill=white] (i3) at (3,2) {}; 
      \node [circle,draw,fill=white] (q2) at (0,0) {}; 
      \node [circle,draw,fill=white] (q3) at (1,0) {}; 
      \node [circle,draw,fill=white] (q1) at (2,0) {}; 
      \node [circle,draw,fill=white] (q4) at (3,0) {}; 
      \draw (i1) -- (q1) node [left,pos=0.3] {$e_1$};
      \draw (i2) -- (q2) node [right,pos=0.15] {$e_2$};
      \draw (i3) -- (q3) node [right,pos=0.3] {$e_3$};
      \draw (i4) -- (q4) node [left,pos=0.15] {$e_4$};
    \end{tikzpicture}\\
    (a) 3-cycle && (b) 4-cycle
  \end{tabular}
  \caption{2-CPE matchings that appear as cycles in the auxiliary graph $H$}
  \label{fig:cycle}
\end{figure}
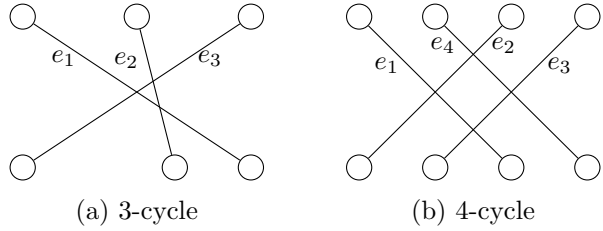

\begin{lem}
  \label{lem:enumcycle}
  For a $2$-layered bipartite graph $G$
  and an admissible set $\MX$,
  we can enumerate all $3$- and $4$-cycles
  in $O(k^2+m^2)$ time and in $O(m^2)$ space. 
\end{lem}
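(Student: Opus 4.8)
The plan is to lean on \lemref{cycle}: since every cycle in $H[M]$ coming from a $2$-CPE matching $M$ has length $3$ or $4$, it suffices to enumerate triangles and $4$-cycles, and the whole task reduces to a local check against $\MX$ and the drawing. First I would turn each case into an explicit constant-size condition. A $3$-cycle is a triple $\{e_1,e_2,e_3\}$ with $\{e_1,e_2\},\{e_1,e_3\},\{e_2,e_3\}\in\MX$; because crossing edges never share an endpoint, such a triple is automatically a matching and induces exactly a triangle in $H$, so no further test is needed. A $4$-cycle, following the configuration in \figref{cycle}(b), is a set $\{e_1,e_2,e_3,e_4\}$ whose four ``sides'' $\{e_1,e_2\},\{e_2,e_4\},\{e_4,e_3\},\{e_3,e_1\}$ lie in $\MX$, whose two ``diagonals'' $\{e_1,e_4\}$ and $\{e_2,e_3\}$ are non-crossing in $G$, and which forms a matching; the diagonal and matching conditions are exactly what forces $H[M]$ to be a chordless $4$-cycle rather than a larger configuration.

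Next I would set up constant-time primitives. Whether two edges cross, and whether they share an endpoint, are both decidable in $O(1)$ directly from the four endpoint indices, so they need no precomputation. For admissibility I would allocate an $m\times m$ boolean table and mark the $k$ pairs of $\MX$; initializing the table costs $O(m^2)$ time and $O(m^2)$ space and thereafter answers ``$\{e,e'\}\in\MX$?'' in $O(1)$. This single table is the source of the $O(m^2)$ terms in the claimed bounds.

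The enumeration is then one double loop over all $O(k^2)$ unordered pairs $\{X,Y\}$ with $X,Y\in\MX$. If $X$ and $Y$ share exactly one node, say $X=\{e_1,e_2\}$ and $Y=\{e_1,e_3\}$, I would test whether the closing pair $\{e_2,e_3\}$ is admissible and, if so, report the triangle. If $X=\{e_1,e_2\}$ and $Y=\{e_3,e_4\}$ are node-disjoint, they can only be the two opposite sides of a $4$-cycle, so I would try the (at most two) ways of closing them, namely connectors $\{e_2,e_4\},\{e_1,e_3\}$ or $\{e_2,e_3\},\{e_1,e_4\}$, checking in $O(1)$ that the connectors are admissible, that the induced diagonals are non-crossing, and that the four nodes form a matching. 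Each test is $O(1)$, so the loop runs in $O(k^2)$ time; together with the preprocessing this gives the stated $O(k^2+m^2)$ time and $O(m^2)$ space.

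The point that needs the most care is simultaneously guaranteeing completeness and a bounded number of reported cycles. Completeness is immediate for triangles, since each of the three edge-pairs of a triangle shares a node. For $4$-cycles the key observation is that the two pairs of \emph{opposite} $H$-edges of any $4$-cycle are node-disjoint admissible pairs, so the cycle is found when the loop processes either of those disjoint pairs; conversely a node-disjoint pair of admissible pairs closes into at most two $4$-cycles. This is exactly what keeps the count down: there are $O(k^2)$ node-disjoint admissible pairs, each yielding $O(1)$ cycles, so there are $O(k^2)$ valid $4$-cycles and none is reported more than twice. I would close by noting that fixing an index order on the reported node sets removes this bounded duplication, and that the restriction to lengths $3$ and $4$ is precisely \lemref{cycle}.
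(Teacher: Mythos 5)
Your proposal is correct and follows essentially the same approach as the paper: precompute an $m\times m$ boolean intersection/admissibility table in $O(m^2)$ time and space, then loop over all $O(k^2)$ pairs of admissible pairs, treating them as opposite sides (or, for triangles, node-sharing sides) of a candidate cycle and performing $O(1)$ closure, non-crossing-diagonal, and matching checks. Your explicit completeness argument (opposite sides of any $4$-cycle are node-disjoint admissible pairs) and the handling of both closings merely spell out details the paper leaves implicit.
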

\begin{proof}
  First, we construct an $m\times m$ intersection matrix
  $\MI$ such that
  each row/column corresponds to an edge,
  and that each entry takes 1 (resp., 0)
  if the corresponding edge pair belongs to $\MX$
  (resp., does not belong to $\MX$).
  We can construct $\MI$ in $O(m^2)$ time
  and store it in $O(m^2)$ space.

  We can enumerate all 4-cycles in $O(k^2)$ time as follows; for each $X,Y\in\MX$,
  let $X=(e_1,e_3)$ and $Y=(e_4,e_2)$,
  where we assume $e_1\le_Ae_4$ without loss of generality. 
  We check whether $X\cup Y$ is a matching such that
  $\{e_1,e_2\},\{e_3,e_4\}\in\MX$ and $\{e_1,e_4\},\{e_2,e_3\}\notin\MX_G$.
  If yes, then $X\cup Y$ is a 4-cycle (\figref{cycle}~(b)).
  The check can be done in $O(1)$ time since
  whether $\{e,e'\}\in\MX$ or not can be identified in $O(1)$ time by using $\MI$. 
  Enumeration of 3-cycles is analogous. 
\end{proof}

There may exist an exponentially large number of paths in $H$. 
However, for our purpose, it is sufficient to take
into account only $O(k^2)$ paths;
Let $M$ be a 2-CPE matching such that $H[M]$ is a path. 
There are two nodes $e,e'\in M$ whose degrees are one. 
This means that $e$ and $e'$ appear in
exactly one admissible pair in $\MXall[M]$.
Suppose that $e<_Ae'$ holds without loss of generality. 
We call $X\in\MXall[M]$ with $e\in X$ (resp., $e'\in X$)
the {\em leftmost} (resp., {\em rightmost}) {\em admissible pair of $M$}. 
For $X,Y\in\MX$, we call a 2-CPE matching $M$ an {\em $(X,Y)$-path}
if $H[M]$ is a path and $X$ and $Y$ are the leftmost and rightmost admissible pairs of $M$, respectively. 
Among all $(X,Y)$-paths,
we have only to take a max-weighted one into account
because all $(X,Y)$-paths
form the same trapezoid whose corners
are $a_i$, $a_j$, $b_p$, and $b_q$,
where $i=\lambda_A(X)$, $j=\gamma_A(Y)$,
$p=\lambda_B(X)$, and $q=\gamma_B(Y)$. 
We define the size of an $(X,Y)$-path $M$ to be $|M|$,
that is, the number of edges in the matching $M$. 
In \figref{path}~(a) and (b),
we show a 2-CPE matching $M=\{e_1,\dots,e_8\}$ that is an $(X,Y)$-path
for $X=\{e_1,e_2\}$ and $Y=\{e_7,e_8\}$. 
The size of $M$ is eight.
We also show the path $H[M]$ in the auxiliary graph $H$ in \figref{HM}. 

\begin{figure}[t!]
  \centering
  \begin{tabular}{clc}
    \begin{tikzpicture}
      \node [circle,draw,fill=white] (i1) at (0,2) {};
      \node [circle,draw,fill=white] (i3) at (0.4,2) {};
      \node [circle,draw,fill=white] (i2) at (1.2,2) {};
      \node [circle,draw,fill=white] (i5) at (1.6,2) {};
      \node [circle,draw,fill=white] (i4) at (2.4,2) {};
      \node [circle,draw,fill=white] (i7) at (3.2,2) {};
      \node [circle,draw,fill=white] (i6) at (4.0,2) {};
      \node [circle,draw,fill=white] (i8) at (5.6,2) {};
      \node [circle,draw,fill=white] (q2) at (0,0) {};
      \node [circle,draw,fill=white] (q1) at (0.8,0) {};
      \node [circle,draw,fill=white] (q4) at (1.6,0) {};
      \node [circle,draw,fill=white] (q3) at (2.4,0) {};
      \node [circle,draw,fill=white] (q6) at (3.2,0) {};
      \node [circle,draw,fill=white] (q5) at (4.0,0) {};
      \node [circle,draw,fill=white] (q8) at (4.8,0) {};
      \node [circle,draw,fill=white] (q7) at (5.6,0) {};
      \draw [red,very thick] (i1) -- (q1) node [left,pos=0.25] {$e_1$};
      \draw [red,very thick] (i2) -- (q2) node [right,pos=0.5] {$e_2$};
      \draw [orange] (i3) -- (q3) node [left,pos=0.6] {$e_3$};
      \draw [orange] (i4) -- (q4) node [right,pos=0.5] {$e_4$};
      \draw [purple] (i5) -- (q5) node [right,pos=0.5] {$e_5$};
      \draw [purple] (i6) -- (q6) node [right,pos=0.5] {$e_6$};
      \draw [blue,very thick] (i7) -- (q7) node [right,pos=0.5] {$e_7$};
      \draw [blue,very thick] (i8) -- (q8) node [right,pos=0.25] {$e_8$};
    \end{tikzpicture}
    & \ &
    \begin{tikzpicture}
      \node [circle,draw,fill=white] (i1) at (0,0) {};
      \node [circle,draw,fill=white] (i3) at (0.4,0) {};
      \node [circle,draw,fill=white] (i2) at (1.2,0) {};
      \node [circle,draw,fill=white] (i5) at (1.6,0) {};
      \node [circle,draw,fill=white] (i4) at (2.4,0) {};
      \node [circle,draw,fill=white] (i7) at (3.2,0) {};
      \node [circle,draw,fill=white] (i6) at (4.0,0) {};
      \node [circle,draw,fill=white] (i8) at (5.6,0) {};
      \node [circle,draw,fill=white] (q2) at (0,2) {};
      \node [circle,draw,fill=white] (q1) at (0.8,2) {};
      \node [circle,draw,fill=white] (q4) at (1.6,2) {};
      \node [circle,draw,fill=white] (q3) at (2.4,2) {};
      \node [circle,draw,fill=white] (q6) at (3.2,2) {};
      \node [circle,draw,fill=white] (q5) at (4.0,2) {};
      \node [circle,draw,fill=white] (q8) at (4.8,2) {};
      \node [circle,draw,fill=white] (q7) at (5.6,2) {};
      \draw [red,very thick] (i2) -- (q2) node [left,pos=0.8] {$e_2$};
      \draw [red,very thick] (i1) -- (q1) node [right,pos=0.8] {$e_1$};
      \draw [orange] (i4) -- (q4) node [left,pos=0.8] {$e_4$};
      \draw [orange] (i3) -- (q3) node [right,pos=0.8] {$e_3$};
      \draw [purple] (i6) -- (q6) node [left,pos=0.85] {$e_6$};
      \draw [purple] (i5) -- (q5) node [right,pos=0.85] {$e_5$};
      \draw [blue,very thick] (i8) -- (q8) node [left,pos=0.85] {$e_8$};
      \draw [blue,very thick] (i7) -- (q7) node [right,pos=0.85] {$e_7$};
    \end{tikzpicture}
    \\
    (a) Upper $(X,Y)$-path&&
    (b) Lower $(X,Y)$-path
  \end{tabular}
  \caption{$(X,Y)$-paths in an underlying graph $G$; $X=\{e_1,e_2\}$, $Y=\{e_7,e_8\}$}
  \label{fig:path}
\end{figure}
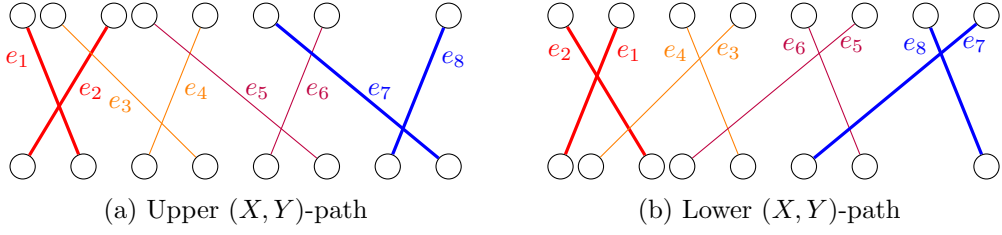

\begin{figure}[t!]
  \centering
    \begin{tikzpicture}
      \node [draw,fill=white] (e1) at (0,0) [] {$e_1$};
      \node [draw,fill=white] (e2) at (1,0) [] {$e_2$};
      \node [draw,fill=white] (e3) at (2,0) [] {$e_3$};
      \node [draw,fill=white] (e4) at (3,0) [] {$e_4$};
      \node [draw,fill=white] (e5) at (4,0) [] {$e_5$};
      \node [draw,fill=white] (e6) at (5,0) [] {$e_6$};
      \node [draw,fill=white] (e7) at (6,0) [] {$e_7$};
      \node [draw,fill=white] (e8) at (7,0) [] {$e_8$};
      \draw [red, very thick] (e1) -- (e2);
      \draw (e2) -- (e3);
      \draw [orange] (e3) -- (e4);
      \draw (e4) -- (e5);
      \draw [purple] (e5) -- (e6);
      \draw (e6) -- (e7);
      \draw [blue, very thick] (e7) -- (e8);
    \end{tikzpicture}
  \caption{Path $H[M]$ in the auxiliary graph $H$ for $(X,Y)$-paths $M$ in \figref{path}}
  \label{fig:HM}
\end{figure}
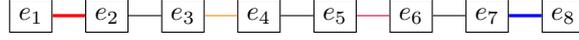

For $X,Y\in\MX$,
we denote by $\rho^\ast(X,Y)$ the max-weight of an $(X,Y)$-path.
We also denote by $\rho^\ast_\odd(X,Y)$ 
(resp., $\rho^\ast_\even(X,Y)$)
the max-weight of an $(X,Y)$-path 
such that the size is odd (resp., even).
We let $\rho^\ast(X,Y)$, $\rho^\ast_\odd(X,Y)$, and $\rho^\ast_\even(X,Y)$
be $-\infty$ when no corresponding path exists.
Clearly we have
\begin{align}
  \rho^\ast(X,Y)=\max\{\rho^\ast_\odd(X,Y),\rho^\ast_\even(X,Y)\}.
  \label{eq:ast}
\end{align}
If $X=Y$, then the path size is two and $\rho^\ast(X,Y)=w(X)$ holds. 
If $X\cap Y=\{e\}$, then the path size is three
and $\rho^\ast(X,Y)=w(X)+w(Y)-w(e)$ holds. 
If $X\cap Y=\emptyset$, then the path size is no less than four. 

%
%

\paragraph{A max-weighted even-sized $(X,Y)$-path.}

We study how to obtain a max-weighted even-sized $(X,Y)$-path for given $X,Y\in\MX$.
We design an algorithm that computes $\rho^\ast_\even(X,Y)$
and constructs the path.  
This strategy is then extended to the odd-size case.

For $X,Y\in\MX$, let $X=(e_X,e'_X)$ and $Y=(e_Y,e'_Y)$. 
We say that an ordered pair $(X,Y)$ is a {\em link}
if one of the followings holds: 
\begin{description}
\item[(a)] $e_X\prec e_Y$,
  $e_X\prec e'_Y$,
  $\{e'_X,e_Y\}\in\MX$, and
  $e'_X\prec e'_Y$. 
\item[(b)] $e_X\prec e_Y$, $\{e_X,e'_Y\}\in\MX$, $e'_X\prec e_Y$, and $e'_X\prec e'_Y$.
\end{description}
If $(X,Y)$ is a link
that satisfies the condition (a) (resp., (b)),
then we call it an {\em upper link}
(resp., a {\em lower link}).
In \figref{path}~(a),
every $(\{e_t,e_{t+1}\},\{e_{t+2},e_{t+3}\})$
with $t\in\{1,3,5\}$ is an upper link. 
In \figref{path}~(b),
every $(\{e_t,e_{t+1}\},\{e_{t+2},e_{t+3}\})$
with $t\in\{1,3,5\}$ is a lower link. 

The following lemma gives a characterization of an even-sized path.


\begin{lem}
  \label{lem:link}
  For 
  a $2$-layered bipartite graph $G$ and an admissible set $\MX$,
  let $M\in\MM_2$ such that $M=\{e_1,\dots,e_{2d}\}$ for an integer $d\ge2$.
  If $H[M]$ is a path  
  $e_1\rightarrow\dots\rightarrow e_{2d}$ such that $e_1<_Ae_{2d}$, 
  then all of $(\{e_{2t-1},e_{2t}\},\{e_{2t+1},e_{2t+2}\})$,
  $t\in[d-1]$,
  are either upper links or lower links. 
\end{lem}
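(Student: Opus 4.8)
The plan is to first establish a global monotonicity of the path and then read off the link conditions locally. Throughout I use that, by \lemref{either} together with the assumption that $H[M]$ is a path, any two edges $e_i,e_j$ with $|i-j|\ge 2$ are non-adjacent in $H[M]$, hence $\{e_i,e_j\}\notin\MXall$, so $e_i$ and $e_j$ are comparable; conversely any two consecutive edges $e_i,e_{i+1}$ intersect, so $\{e_i,e_{i+1}\}\in\MXall[M]\subseteq\MX$ and (by \lemref{either} again) they are incomparable.

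First I would prove the monotonicity statement that $e_i\prec e_j$ holds for all $i<j$ with $j\ge i+2$. The key observation is that the direction of the comparison of distance-two pairs is forced to be constant: if $e_i\prec e_{i+2}$ then $e_{i+1}\prec e_{i+3}$, because assuming $e_{i+3}\prec e_{i+1}$ and splitting on the comparison of the non-adjacent pair $e_i,e_{i+3}$ yields, by transitivity of $\prec$, either $e_i\prec e_{i+1}$ or $e_{i+3}\prec e_{i+2}$, each contradicting the incomparability of adjacent edges. Shifting the index shows that either $e_i\prec e_{i+2}$ for every $i$, or $e_{i+2}\prec e_i$ for every $i$; in either case the corresponding full ordering of all non-adjacent pairs follows by strong induction on $j-i$, again using transitivity and adjacency-incomparability to exclude the reverse comparison. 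The hypothesis $e_1<_Ae_{2d}$ then rules out the second alternative, since it would give $e_{2d}\prec e_1$ and hence $e_{2d}<_Ae_1$; therefore $e_i\prec e_j$ for all $i<j$ with $j\ge i+2$.

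With this monotonicity in hand, I would fix $t\in[d-1]$, analyze the four edges $e_{2t-1},e_{2t},e_{2t+1},e_{2t+2}$, and split into the two cases $e_{2t-1}<_Ae_{2t}$ and $e_{2t}<_Ae_{2t-1}$. In the first case, combining $e_{2t-1}<_Ae_{2t}$ (which forces $e_{2t}<_Be_{2t-1}$ since the two intersect) with the forward relations $e_{2t-1}\prec e_{2t+1}$, $e_{2t}\prec e_{2t+2}$, and $e_{2t-1}\prec e_{2t+2}$ pins down the total orders $e_{2t-1}<_Ae_{2t+1}<_Ae_{2t}<_Ae_{2t+2}$ and $e_{2t}<_Be_{2t-1}<_Be_{2t+2}<_Be_{2t+1}$; in particular the way $e_{2t}$ meets $e_{2t+1}$ and $e_{2t+1}$ meets $e_{2t+2}$ is deduced rather than assumed. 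Reading the within-pair $<_A$ order gives $e_X=e_{2t-1}$, $e'_X=e_{2t}$, $e_Y=e_{2t+1}$, $e'_Y=e_{2t+2}$, and one checks directly that $e_X\prec e_Y$, $e_X\prec e'_Y$, $e'_X\prec e'_Y$, while $\{e'_X,e_Y\}=\{e_{2t},e_{2t+1}\}\in\MX$ because these are consecutive on the path; this is exactly condition (a), so $(X_t,X_{t+1})$ is an upper link. The case $e_{2t}<_Ae_{2t-1}$ is the mirror image and yields a lower link.

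I expect the main obstacle to be the monotonicity step, as it is the only place where the global hypothesis $e_1<_Ae_{2d}$ enters, and it requires propagating the local comparison direction along the whole path through the partial-order axioms; once the four edges are correctly ordered, verifying the link conditions is routine. A secondary point to handle carefully is that the path-order of the two edges inside a pair need not agree with their $<_A$-order, which is precisely why the two cases arise and why the lemma asserts only that each $(X_t,X_{t+1})$ is one of the two link types rather than a fixed one.
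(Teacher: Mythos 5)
Your proof is correct, but it reaches the key monotonicity facts by a genuinely different route than the paper. The paper proves only the three relations it needs ($e_{2t-1}\prec e_{2t+1}$, $e_{2t-1}\prec e_{2t+2}$, $e_{2t}\prec e_{2t+2}$) by a minimal-counterexample argument: assuming a smallest $t'$ with $e_{2t'+1}\prec e_{2t'-1}$, it derives a contradiction from $e_1\prec e_{2d}$ using a geometric/connectivity claim (an edge lying $\prec$-between the two ends of a sub-path of $H[M]$ must cross some edge of that sub-path --- the same style of reasoning as in \lemref{comp}), which the paper leaves somewhat informal. You instead establish the stronger statement that $e_i\prec e_j$ for all $j\ge i+2$ by a purely order-theoretic propagation: the direction of each distance-two comparison forces the next one, using only transitivity of $\prec$ and the fact (\lemref{either}) that adjacent path edges are incomparable, after which $e_1<_Ae_{2d}$ selects the forward direction. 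This buys a self-contained argument with no appeal to the drawing beyond \lemref{either}, at the cost of proving more than is strictly needed; the paper's version is shorter if one accepts the crossing-counting claim. Your concluding case analysis is also slightly stronger than the paper's: by deriving the full interleaving orders $e_{2t-1}<_Ae_{2t+1}<_Ae_{2t}<_Ae_{2t+2}$ you deduce, rather than assume, the within-pair order of $X_{t+1}$, which is exactly the propagation the paper states as ``if $(X_1,X_2)$ is an upper link, then so are all subsequent pairs.'' One suggestion: make that uniformity conclusion explicit (all links upper, or all lower), since the literal wording of \lemref{link} is ambiguous on this point and the uniform reading is what the definition of upper/lower $(X,Y)$-paths and \eqref{rhomax} actually rely on; in your write-up it is an immediate corollary of the derived orders, but it should be recorded.
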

\begin{proof}
  For $t\in[d]$, we denote $X_t=\{e_{2t-1},e_{2t}\}$.
  For $s,s'\in[2d]$,
  if $|s-s'|=1$, then $\{e_s,e_{s'}\}\in\MXall[M]$ holds,
  which means that two edges $e_s$ and $e_{s'}$ intersect.
  Otherwise, since $\{e_s,e_{s'}\}$ does not belong to $\MXall$, 
  they are comparable (\lemref{either}).
  Among the edges in $X_t$ and $X_{t+1}$, $t\in[d-1]$,
  $e_{2t}\in X_t$ and $e_{2t+1}\in X_{t+1}$ intersect,
  and $\{e_{2t-1},e_{2t+1}\}$, $\{e_{2t-1},e_{2t+2}\}$ and $\{e_{2t},e_{2t+2}\}$
  are comparable pairs.  
  We claim that $e_{2t-1}\prec e_{2t+1}$ should hold;
  if not so, let $t'\in[d-1]$ denote the smallest index such that $e_{2t'+1}\prec e_{2t'-1}$.
  This means $e_1\prec\dots\prec e_{2t'-1}$. 
  We have $e_1\prec e_{2d}$ by assumption.
  If $e_{2d}\prec e_{2t'-1}$, 
  there is an edge $e_s$ $(s\in[2,2t'-2])$ that intersects with $e_{2d}$,
  which contradicts that $e_{2d}$ intersects with only $e_{2d-1}$. 
  If $e_{2t'-1}\prec e_{2d}$, since $e_{2t'+1}\prec e_{2t'-1}$,
  there is an edge $e_s$ $(s\in[2t'+2,2d-1])$ that intersects with $e_{2t'-1}$,
  which contradicts that $e_{2t'-1}$ intersects with only $e_{2t'-2}$ and $e_{2t'}$. 
  We can show that $e_{2t-1}\prec e_{2t+2}$ and $e_{2t}\prec e_{2t+2}$ also hold
  in the same way. 

  Now we have two cases: $e_{2t-1}<_Ae_{2t}$ or $e_{2t}<_Ae_{2t-1}$.
  In the former case, $(X_t,X_{t+1})$ is an upper link
  since the condition (a) holds by
  $e_X=e_{2t-1}$, $e'_X=e_{2t}$, $e_Y=e_{2t+1}$ and $e'_Y=e_{2t+2}$.
  Then $e_{2t+1}<_A e_{2t+2}$ also holds.
  For $t\le d-2$, $(X_{t+1},X_{t+2})$ is also an upper link
  since the condition (a) holds by
  $e_X=e_{2t+1}$, $e'_X=e_{2t+2}$, $e_Y=e_{2t+3}$ and $e'_Y=e_{2t+4}$.
  We see that, if $(X_1,X_2)$ is an upper link, then $(X_{t+1},X_{t+2})$ is also an upper link for $t\in[d-2]$. 
  The latter case (i.e., $e_{2t}<_Ae_{2t-1}$) is analogous. 
\end{proof}

For $X,Y\in\MX$,
let $M$ denote an even-sized $(X,Y)$-path
such that the path $H[M]$ is given by $e_1\rightarrow\dots\rightarrow e_{2d}$
and $e_1<_Ae_{2d}$. 
We call $M$ an {\em upper} (resp., a {\em lower}) {\em $(X,Y)$-path}
if all of $(\{e_{2t-1},e_{2t}\},\{e_{2t+1},e_{2t+2}\})$,
$t\in[d-1]$,
are upper (resp., lower) links. 
\figref{path}~(a) and (b)
illustrate these two types of paths.
For convenience, 
we regard that an $(X,X)$-path is an upper path
as well as a lower path. 
We denote by $\rho^\astup_\even(X,Y)$ (resp., $\rho^\astdown_\even(X,Y)$)
the max-weight of an upper (resp., a lower) $(X,Y)$-path
such that the size is even.
We define
$\rho^\astup_\even(X,Y)\triangleq-\infty$
(resp., $\rho^\astdown_\even(X,Y)\triangleq-\infty$)
if no such $(X,Y)$-path exists.  
Clearly we have
\begin{align}
  \rho^\ast_\even(X,Y)=\max\{\rho^\astup_\even(X,Y),\rho^\astdown_\even(X,Y)\}.\label{eq:rhomax}
\end{align}

Due to symmetry, we focus on even-sized upper paths. 
For $X\in\MX$, we define $\Next(X)\triangleq\{Z\in\MX\mid(X,Z)\textrm{\ is\ an\ upper\ link}\}$. 
For $Y\in\MX$, 
we define $\Prev(Y)\triangleq\{Z\in\MX\mid Y\in\Next(Z)\}$.
Moreover, for $j\in[\lambda_A(Y)+1,\gamma_A(Y)-1]$,
we define the subset $\Prev(Y;j)=\{Z\in \Prev(Y)\mid \gamma_A(Z)=j\}$.
For $X,Y\in\MX$ with $X\ne Y$,
an even-sized upper $(X,Y)$-path 
should contain $Z\in\Prev(Y;j)$ for some $j\in[\lambda_A(Y)+1,\gamma_A(Y)-1]$.
In other words, $(Z,Y)$ is the ``last'' upper link on the $(X,Y)$-path. 
Then we define $\rho^\astup_\even(X,Y;j)$
to be the max-weight of an even-sized upper $(X,Y)$-path
such that last upper link on the path, say $(Z,Y)$,
satisfies $\gamma_A(Z)\in[\lambda_A(Y)+1,j]$.
Again, if no such path exists,
we define $\rho^\astup_\even(X,Y;j)\triangleq-\infty$. 
Obviously we have
\begin{align*}
  \rho^\astup_\even(X,Y;\lambda_A(Y)+1)\le\dots\le\rho^\astup_\even(X,Y;\gamma_A(Y)-1)=\rho^\astup_\even(X,Y).
\end{align*}

Let $L=(\MX,\ML)$ denote a digraph such that
the admissible set $\MX$ is the node set
and that $\ML=\{(Z,Z')\in\MX\times\MX\mid Z'\in\Next(Z)\}$
is the arc set.
Since $\lambda_A(Z)<\lambda_A(Z')$ holds for any $(Z,Z')\in\ML$,
no cycle exists in $L$, that is, $L$ is a DAG. 
We define the subset $\MX_X\subseteq\MX$ to be
$\MX_X=\{Y\in\MX\mid\textrm{there\ is\ a\ path\ from\ }X\textrm{\ to\ }Y\textrm{\ in\ }L\}$.
Clearly, if there is an even-sized upper $(X,Y)$-path,
then $Y\in\MX_X$ holds, while the converse does not necessarily hold.  
For $Y\in\MX\setminus \MX_X$, we have
\[
\rho^\astup_\even(X,Y;\lambda_A(Y)+1)=\dots=\rho^\astup_\even(X,Y;\gamma_A(Y)-1)=\rho^\astup_\even(X,Y)=-\infty. 
\]
For $Y\in\Next(X)\subseteq\MX_X$,
the only even-sized upper $(X,Y)$-path is $X\cup Y$. 
Then it holds that;
\begin{align}
  \rho^\astup_\even(X,Y;j)=
  \left\{
  \begin{array}{ll}
    -\infty & \textrm{if\ }j\in[\lambda_A(Y)+1,\gamma_A(X)-1],\\
    w(X)+w(Y) & \textrm{if\ }j\in[\gamma_A(X),\gamma_A(Y)-1]. 
  \end{array}
  \right.
  \label{eq:basecase}
\end{align}
The following lemma gives a characterization of $\rho^\astup_\even(X,Y;j)$
for $Y\in\MX_X\setminus\Next(X)$.

\begin{lem}
  \label{lem:recur}
  For a $2$-layered edge-weighted bipartite graph $G$
  and an admissible set $\MX$,
  let $X,Y\in\MX$. 
  If $Y\in\MX_X\setminus(\{X\}\cup\Next(X))$, then for $j=\lambda_A(Y)+1$, 
  \begin{align}
    \rho^\astup_\even(X,Y;j)=\left\{
    \begin{array}{ll}
      \displaystyle w(Y)+\max_{Z\in\Prev(Y;j)}\Big\{\rho^\astup_\even(X,Z;\lambda_A(Y)-1)\Big\}&
      \textrm{if\ }\Prev(Y;j)\ne\emptyset,\\
      -\infty&\textrm{otherwise},
    \end{array}
    \right.
    \label{eq:recur1}
  \end{align}
  and for any $j\in[\lambda_A(Y)+2,\gamma_A(Y)-1]$,
  \begin{align}
    \rho^\astup_\even(X,Y;j)=\left\{
    \begin{array}{ll}
      \displaystyle\max\Big\{\rho^\astup_\even(X,Y;j-1),w(Y)+\max_{Z\in\Prev(Y;j)}\{\rho^\astup_\even(X,Z;\lambda_A(Y)-1)\}\Big\}
      &\textrm{if\ }\Prev(Y;j)\ne\emptyset,\\
      \rho^\astup_\even(X,Y;j-1)&\textrm{otherwise.}
    \end{array}
    \right.
    \label{eq:recur2}
  \end{align}
\end{lem}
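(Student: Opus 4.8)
The plan is to read \eqref{recur1} and \eqref{recur2} as a ``peel off the rightmost admissible pair'' dynamic program, and to establish a weight-preserving correspondence between even-sized upper $(X,Y)$-paths and even-sized upper paths that end one admissible pair earlier. Fix $X,Y\in\MX$ with $Y\in\MX_X\setminus(\{X\}\cup\Next(X))$, and take any even-sized upper $(X,Y)$-path $M$, with path $H[M]$ given by $e_1\rightarrow\dots\rightarrow e_{2d}$ and blocks $X_t=\{e_{2t-1},e_{2t}\}$, so that $X_1=X$, $X_d=Y$, and $(X_t,X_{t+1})$ is an upper link for every $t\in[d-1]$. Since $Y\notin\{X\}\cup\Next(X)$ we have $d\ge 3$; write $Z=X_{d-1}$ and $W=X_{d-2}$, each ordered so that its first component is the $A$-smaller edge. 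The last upper link is $(Z,Y)$, so $Z\in\Prev(Y)$, and deleting $Y$ leaves $M'=M\setminus Y$, an even-sized upper $(X,Z)$-path with rightmost block $Z$, last link $(W,Z)$, and $w(M)=w(M')+w(Y)$.

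The crux is that the admissible allowance for the prefix is exactly the parameter $\lambda_A(Y)-1$: the last link $(W,Z)$ of $M'$ satisfies $\gamma_A(W)\le\lambda_A(Y)-1$ \emph{iff} $M'\cup Y$ is a valid even-sized upper $(X,Y)$-path whose last link is $(Z,Y)$. For the forward direction I would note that $W$ and $Y$ are non-adjacent blocks of $H[M]$, so their edges do not intersect and are comparable by \lemref{either}; since $W$ lies to the left of $Y$ along the path, the monotonicity proved inside \lemref{link} gives $W\prec Y$, hence $\gamma_A(W)<\lambda_A(Y)$. For the converse I would start from an even-sized upper $(X,Z)$-path $M'$ whose last link $(W,Z)$ obeys $\gamma_A(W)\le\lambda_A(Y)-1$ together with the hypothesis $Z\in\Prev(Y)$, and verify that adjoining $Y$ creates no spurious crossing: condition (a) for the upper link $(Z,Y)$ forces $Y$ to meet $Z$ only in the single admissible pair $\{e'_Z,e_Y\}$, while the chain $e_W\prec e'_Z\prec e'_Y$ supplied by the two upper links $(W,Z)$ and $(Z,Y)$ yields $\gamma_B(W)<\lambda_B(Y)$; combined with $\gamma_A(W)<\lambda_A(Y)$ this gives $W\prec Y$, and the same monotonicity propagates $\prec Y$ to every block left of $W$. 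Thus no edge of $M'$ other than $e'_Z$ meets $Y$, so $M'\cup Y$ is again an even-sized upper $(X,Y)$-path.

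With this correspondence I would assemble the recurrence by partitioning the even-sized upper $(X,Y)$-paths counted by $\rho^\astup_\even(X,Y;j)$ according to the value $\gamma_A(Z)$ of their last link $(Z,Y)$. Those with $\gamma_A(Z)\in[\lambda_A(Y)+1,j-1]$ are exactly the paths counted by $\rho^\astup_\even(X,Y;j-1)$, and those with $\gamma_A(Z)=j$ range precisely over $Z\in\Prev(Y;j)$; for each such $Z$ the best prefix weight is $\rho^\astup_\even(X,Z;\lambda_A(Y)-1)$ by the correspondence, contributing $w(Y)+\rho^\astup_\even(X,Z;\lambda_A(Y)-1)$ (with the $-\infty$ convention absorbing the case of no admissible prefix). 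Taking the maximum over the two parts gives \eqref{recur2}, and when $\Prev(Y;j)=\emptyset$ only the $\rho^\astup_\even(X,Y;j-1)$ part survives. For the base case $j=\lambda_A(Y)+1$ in \eqref{recur1}, the range $[\lambda_A(Y)+1,j-1]$ is empty, so the $\rho^\astup_\even(X,Y;j-1)$ term is vacuous; here I would also record that every $Z\in\Prev(Y)$ satisfies $\lambda_A(Y)+1\le\gamma_A(Z)\le\gamma_A(Y)-1$ (from $e_Y<_Ae'_Z$ and $e'_Z\prec e'_Y$), which reproduces both the stated index range and, through \eqref{basecase} when $Z\in\Next(X)$, the bottom of the recursion.

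The main obstacle I anticipate is the two-way verification in the second paragraph, and in particular justifying that the scalar parameter $\lambda_A(Y)-1$, which constrains only the $A$-coordinate $\gamma_A(W)$, certifies the full non-contact relation $W\prec Y$. The delicate point on which the whole recurrence rests is that the $B$-coordinate inequality $\gamma_B(W)<\lambda_B(Y)$ is obtained for free from the two consecutive upper-link conditions, so that no separate $B$-indexed parameter is needed.
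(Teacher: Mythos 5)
Your proof is correct and takes essentially the same route as the paper's: peel off the rightmost admissible pair $Y$, identify the remaining prefix as an even-sized upper $(X,Z)$-path whose last link is constrained by the parameter $\lambda_A(Y)-1$, and obtain the recurrences by partitioning paths according to the value $\gamma_A(Z)$ of their last link $(Z,Y)$. You are in fact somewhat more thorough than the paper, which argues only the peeling direction (that the prefix's last link $(Q,Z)$ satisfies $\gamma_A(Q)\le\lambda_A(Y)-1$) and leaves the converse extension step implicit, whereas you explicitly verify that adjoining $Y$ creates no spurious crossings, including the key observation that $\gamma_B(W)<\lambda_B(Y)$ follows for free from the two consecutive upper-link conditions.
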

\begin{proof}
  Suppose that $j=\lambda_A(Y)+1$. 
  If $\Prev(Y;j)=\emptyset$,
  then there is no even-sized upper $(X,Y)$-path 
  such that the last link $(Z,Y)$ satisfies $\gamma_A(Z)=j$.
  Hence we have $\rho^\astup_\even(X,Y;j)=-\infty$.
  
  We consider the case of $\Prev(Y;j)\ne\emptyset$.
  Suppose that an even-sized upper $(X,Y)$-path exists.
  There is $Z\in\Prev(Y;j)$ such that $(Z,Y)$ is the last link. 
  let $M^\ast_Z$ be a max-weighted $(X,Y)$-path
  among those having $(Z,Y)$ as the last link.  
  We partition $M^\ast_Z$ into $M^\ast_Z=M_Z\cup Y$,
  where  $M_Z$ is a max-weighted even-sized upper $(X,Z)$-path. 
  Since $Z\ne X$ by $Y\in\MX_X\setminus(\{X\}\cup\Next(X))$,
  there is $Q\in\MX$ such that $(Q,Z)$ is the last upper link of $M_Z$ (\figref{link}).
  No edge in $Q$ should intersect with any edge in $Y$,
  and it holds that $\gamma_A(Q)\in[\lambda_A(Z)+1,\lambda_A(Y)-1]$. 
  Hence we have\[
  \rho^\astup_\even(X,Y;j)=\max_{Z\in\Prev(Y;j)}\{w(M^\ast_Z)\}=w(Y)+\max_{Z\in\Prev(Y;j)}\{w(M_Z)\}
  \]
  and $w(M_Z)=\rho^\astup_\even(X,Z;\lambda_A(Y)-1)$.
  We see that Eq.~\eqref{recur1} determines $\rho^\astup_\even(X,Y;j)$ correctly.  
  If no even-sized upper $(X,Y)$-path exists,
  then $\rho^\astup_\even(X,Z;\lambda(Y)-1)=-\infty$ holds
  for all $Z\in\Prev(Y;j)$ by induction. 
  We have $\rho^\astup_\even(X,Y;j)=-\infty$ by \eqref{recur1}.
  
  The proof for $j\in[\lambda_A(Y)+2,\gamma_A(Y)-1]$ is analogous.
\end{proof}

\begin{figure}[t!]
  \centering
  \begin{tikzpicture}
    \node[circle,draw,fill=white] (yA0) at (8,2) [label=above:$\lambda_A(Y)$] {};
    \node[circle,draw,fill=white] (yA1) at (12,2) [label=above:$\gamma_A(Y)$] {};
    \node[circle,draw,fill=white] (yB0) at (10,0) {};
    \node[circle,draw,fill=white] (yB1) at (11,0) {};
    \draw (yA0) -- (yB1);
    \draw (yA1) -- (yB0);
    \node[circle,draw,fill=gray] (zA0) at (4,2) [label=above:$\lambda_A(Z)$] {};
    \node[circle,draw,fill=gray] (zA1) at (10,2) [label=above:$\gamma_A(Z)$] {};
    \node[circle,draw,fill=gray] (zB0) at (6,0) {};
    \node[circle,draw,fill=gray] (zB1) at (7,0) {};
    \draw (zA0) -- (zB1);
    \draw (zA1) -- (zB0);
    \node[circle,draw,fill=black] (qA0) at (2,2) [label=above:$\lambda_A(Q)$] {};
    \node[circle,draw,fill=black] (qA1) at (6,2) [label=above:$\gamma_A(Q)$] {};
    \node[circle,draw,fill=black] (qB0) at (2,0) {};
    \node[circle,draw,fill=black] (qB1) at (3,0) {};
    \draw (qA0) -- (qB1);
    \draw (qA1) -- (qB0);
  \end{tikzpicture}
  \caption{Three admissible pairs $Q$ (black), $Z$ (gray), and $Y$ (white) such that $(Q,Z)$ and $(Z,Y)$ are upper links}
  \label{fig:link}
\end{figure}
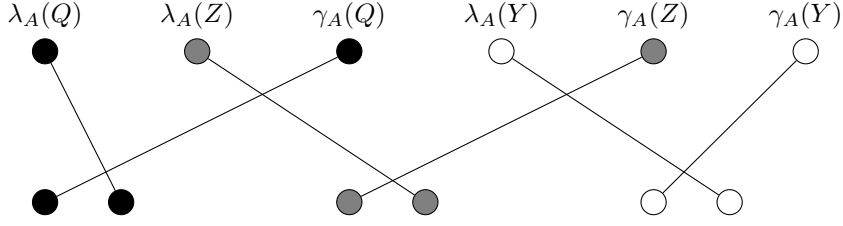

By \lemref{recur}, for a given $X\in\MX$,
we can compute $\rho^\astup_\even(X,Y)$ for all $Y\in\MX_X$ with $Y\ne X$ as follows.
First, we compute the value $\rho^\astup_\even(X,Y;j)$ for all $Y\in \Next(X)$
and $j\in[\lambda_A(Y)+1,\gamma_A(Y)-1]$ by \eqref{basecase}.
Note that $\rho^\astup_\even(X,Y)=\rho^\astup_\even(X,Y;\gamma_A(Y)-1)$ holds. 
Then, if there is $Y\in\MX_X$
such that $\rho^\astup_\even(X,Y)$ has not been determined 
and $\rho^\astup_\even(X,Z)$ has been determined for all $Z\in\Prev(Y)$, 
we compute the value of $\rho^\astup_\even(X,Y;j)$
for all $j\in[\lambda_A(Y)+1,\gamma_A(Y)-1]$ by \eqref{recur1} and \eqref{recur2}.

We summarize the algorithm as
\textsc{EvenUpper} in \algref{evenup}. 
The binary flag $\delta(Y)$ is introduced to
represent whether $\rho^\astup_\even(X,Y)$ has been determined or not. 

\begin{algorithm}[t!]
  \caption{An algorithm \textsc{EvenUpper}
    to compute $\rho^\astup_\even(X,Y)$
    for a given $X\in\MX$ and all $Y\in\MX_X\setminus\{X\}$}
  \label{alg:evenup}
  \DontPrintSemicolon
  \SetKwInOut{Input}{Input}\SetKwInOut{Output}{Output}
  \Input{A $2$-layered bipartite graph $G$,
    a non-empty admissible set $\MX$, and $X\in\MX$}
  \Output{The max-weight $\rho^\astup_\even(X,Y)$
    of an even-sized upper $(X,Y)$-path 
    for all $Y\in\MX_X\setminus\{X\}$}
  \For{$Y\in\MX_X\setminus\{X\}$}{
    \If{$Y\in\Next(X)$}{Compute $\rho^\astup_\even(X,Y;j)$ by \eqref{basecase} for all $j\in[\lambda_A(Y)+1,\gamma_A(Y)-1]$\;
      \label{line:evenup.rho1}
      $\delta(Y)\gets\True$\;
      \label{line:evenup.d1}
    }
    \lElse{$\delta(Y)\gets\False$}
  }
  \While{there is $Y\in\MX_X\setminus\{X\}$ such that $\delta(Y)=\False$ and $\delta(Z)=\True$ for all $Z\in\Prev(Y)$}{
    \label{line:evenup.while}
    Compute $\rho^\astup_\even(X,Y;j)$ by \eqref{recur1} and \eqref{recur2} for all $j\in[\lambda_A(Y)+1,\gamma_A(Y)-1]$\;
    \label{line:evenup.rho2}
    $\delta(Y)\gets\True$\;
    \label{line:evenup.d2}
  }
        {\bf output} $\rho^\astup_\even(X,Y;\gamma_A(Y)-1)$
        as $\rho^\astup_\even(X,Y)$ for all $Y\in\MX_X\setminus\{X\}$
\end{algorithm}

\begin{lem}
  \label{lem:evenup}
  For a $2$-layered edge-weighted bipartite graph $G$,
  a non-empty admissible set $\MX$ and
  an admissible pair $X\in\MX$,
  the algorithm \textsc{EvenUpper} in \algref{evenup}
  computes $\rho^\astup_\even(X,Y)$
  for all $Y\in\MX_X\setminus\{X\}$
  in $O(k^2+kn)$ time and space.
\end{lem}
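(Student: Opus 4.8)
The lemma asserts that Algorithm \textsc{EvenUpper} correctly computes $\rho^\astup_\even(X,Y)$ for all $Y\in\MX_X\setminus\{X\}$ in $O(k^2+kn)$ time and space. The correctness part is essentially guaranteed by the recurrences of \lemref{recur} together with the base case \eqref{basecase}; the substance of the proof is (a) verifying that the while-loop processes nodes in an order consistent with the DAG $L$, so that whenever we apply \eqref{recur1} and \eqref{recur2} to compute $\rho^\astup_\even(X,Y;\cdot)$, the values $\rho^\astup_\even(X,Z;\lambda_A(Y)-1)$ for all $Z\in\Prev(Y)$ are already available, and (b) the complexity bound. My plan is to treat correctness briefly and devote the bulk of the argument to the running-time analysis, which I expect to be the main obstacle.

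\textbf{Correctness.}
First I would observe that $\ML$ is acyclic (already noted in the excerpt, since $\lambda_A(Z)<\lambda_A(Z')$ along any arc). The initialization for-loop sets $\delta(Y)=\True$ exactly for $Y\in\Next(X)$, computing their values via \eqref{basecase}; note $\Next(X)$ is precisely the set of $\prec$-out-neighbors of $X$ in $L$, i.e.\ the nodes reachable from $X$ by a single arc. I would then argue by induction along a topological order of $L$ restricted to $\MX_X$ that the while-loop guard---$\delta(Y)=\False$ and $\delta(Z)=\True$ for every $Z\in\Prev(Y)$---is satisfiable until all of $\MX_X\setminus\{X\}$ is processed: any unprocessed node of minimal topological rank has all its in-neighbors already determined. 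When such a $Y$ is selected, \lemref{recur} guarantees \eqref{recur1} and \eqref{recur2} compute $\rho^\astup_\even(X,Y;j)$ correctly from the already-determined values, and the monotone chain $\rho^\astup_\even(X,Y;\lambda_A(Y)+1)\le\dots\le\rho^\astup_\even(X,Y;\gamma_A(Y)-1)$ yields the reported $\rho^\astup_\even(X,Y)=\rho^\astup_\even(X,Y;\gamma_A(Y)-1)$ at termination.

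\textbf{Complexity.}
This is where care is needed. The while-loop executes once per $Y\in\MX_X\setminus\{X\}$, hence $O(k)$ iterations. For a single $Y$, the inner work computes $\rho^\astup_\even(X,Y;j)$ for all $j$ in the interval $[\lambda_A(Y)+1,\gamma_A(Y)-1]$, whose length is $O(n)$; for each such $j$ the recurrence \eqref{recur2} takes a maximum over $Z\in\Prev(Y;j)$. The key accounting point is that summing $|\Prev(Y;j)|$ over all $j$ for a fixed $Y$ equals $|\Prev(Y)|\le k$, since the sets $\Prev(Y;j)$ partition $\Prev(Y)$ by the value $\gamma_A(Z)=j$. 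Thus the per-$Y$ cost splits into an $O(n)$ term for sweeping the interval and propagating the running maximum via \eqref{recur2}, plus an $O(|\Prev(Y)|)=O(k)$ term for the over-$Z$ maxima; summed over the $O(k)$ choices of $Y$ this gives $O(kn)$ for the sweeps and $O(k^2)$ for the neighbor-maxima, for $O(k^2+kn)$ total. I would also note that identifying the selectable $Y$ in the while-guard can be organized by a topological traversal of $L$ (e.g.\ maintaining in-degree counters decremented as neighbors are finalized), contributing $O(|\ML|)=O(k^2)$, which is absorbed. For space, storing $\rho^\astup_\even(X,Y;j)$ naively would be $O(kn)$, and the arc set of $L$ is $O(k^2)$; together $O(k^2+kn)$. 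The main obstacle is precisely this charging argument---making sure the double maximum over $j$ and over $Z\in\Prev(Y;j)$ is not double-counted, which is resolved by the partition identity $\sum_j|\Prev(Y;j)|=|\Prev(Y)|$ and by treating the interval sweep (the running-maximum propagation across consecutive $j$) as $O(n)$ work independent of the neighbor structure.
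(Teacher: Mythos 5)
Your proposal is correct and follows essentially the same route as the paper's proof: correctness by induction over the DAG order of $L$ (the paper phrases it via the longest-path length from $X$, you via a topological order, which is the same argument), and the identical complexity accounting — per-$(Y,j)$ work charged as $O(|\Prev(Y;j)|)$ with $\sum_j|\Prev(Y;j)|=|\Prev(Y)|$ and $\sum_Y|\Prev(Y)|=O(k^2)$, plus $O(kn)$ for the $O(n)$-length index sweeps over all $k$ choices of $Y$, with the ready-set maintained by a Kahn-style queue costing $O(k^2)$ amortized. No further comment is needed.
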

\begin{proof}
  For $Y\in\Next(X)$, $\rho^\astup_\even(X,Y;j)$ is computed
  in \lineref{evenup.rho1}.  
  We see that the algorithm visits all $Y\in\MX_X\setminus(\{X\}\cup\Next(X))$ by induction with respect to the length of the longest path from $X$.
  When $Y$ is visited,
  $\rho^\astup_\even(X,Z;\lambda_A(Y)-1)$ is already computed
  for all $Z\in\Prev(Y)$.
  Then $\rho^\astup_\even(X,Y;j)$ is computed correctly
  in \lineref{evenup.rho2}
  for all $j\in[\lambda_A(Y)+1,\gamma_A(Y)-1]$ (\lemref{recur}). 
  %
  
  The algorithm \textsc{EvenUpper} can be implemented as follows.
  For preprocessing,
  we construct the DAG $L$
  and the family $\MX_X$, which takes $O(k^2)$ time. 
  During the execution,
  we maintain all $Y$ that satisfy the condition of \lineref{evenup.while}
  in a queue. 
  Every time $\delta(Y)$ is set to $\True$
  (i.e., lines~\ref{line:evenup.d1}
  and \ref{line:evenup.d2}),
  we check whether each $Y'\in\Next(Y)\setminus\Next(X)$
  satisfies the condition of \lineref{evenup.while};
  if yes, we insert $Y'$ to the queue.
  We can do the check over all $Y\in\MX_X\setminus\{X\}$ in $O(k^2)$ amortized time
  since $\sum_{Y}|\Next(Y)|=O(k^2)$.
  For $Y\in\MX_X\setminus(\{X\}\cup\Next(X))$ and $j\in[\lambda_A(Y)+1,\gamma_A(Y)-1]$,
  we can compute $\rho^\astup_\even(X,Y;j)$ in $O(|\Prev(X,Y;j)|)$ time.
  We obtain $\rho^\ast_\even(X,Y)$ by computing $\rho^\astup_\even(X,Y;j)$ for all $j$,
  which takes $\sum_jO(|\Prev(X,Y;j)|)=O(|\Prev(X,Y)|)$ time. 
  Therefore, we can compute $\rho^\ast_\even(X,Y)$ for all $Y$ in $O(k^2+kn)$ time
  since every $Y$ is inserted to the queue exactly once,
  $\sum_YO(|\Prev(X,Y)|)=O(k^2)$, and
  there are $O(kn)$ entries for $\rho^\ast_\even(X,Y;j)$. 
  For the space complexity,
  we use $O(k^2)$ space for $L$,
  and $O(kn)$ space for
  all $\rho^\astup_\even(X,Y;j)$ and $\delta(Y)$.
\end{proof}

The algorithm can be used to construct
a max-weighted even-sized upper $(X,Y)$-path. 

\begin{lem}
  \label{lem:evenup_path}
  For a $2$-layered edge-weighted bipartite graph $G$,
  a non-empty admissible set $\MX$ and
  an admissible pair $X\in\MX$,
  we can construct a max-weighted even-sized upper $(X,Y)$-path
  for all $Y\in\MX$ in $O(k^2+kn)$ time and space.
\end{lem}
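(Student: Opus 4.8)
The plan is to augment the dynamic program \textsc{EvenUpper} of \algref{evenup} with backpointers and then recover a path by backtracking through the recurrences of \lemref{recur}. While the forward pass computes $\rho^\astup_\even(X,Y;j)$ for $Y\in\MX_X\setminus(\{X\}\cup\Next(X))$, I would additionally store a pointer $\pi(Y;j)$ to a pair $Z\in\Prev(Y)$ with $\gamma_A(Z)\le j$ that attains the inner maximum $\max_{Z\in\Prev(Y;\cdot)}\{\rho^\astup_\even(X,Z;\lambda_A(Y)-1)\}$ currently realizing $\rho^\astup_\even(X,Y;j)$. By the carry-over structure of \eqref{recur2}, $\pi(Y;j)$ either inherits $\pi(Y;j-1)$ or is reset to a new maximizer in $\Prev(Y;j)$, so each pointer is fixed in $O(1)$ extra time per table entry and costs $O(kn)$ additional space overall; hence the augmented routine still runs within the $O(k^2+kn)$ bound of \lemref{evenup}.

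To build a max-weighted even-sized upper $(X,Y_0)$-path I would trace these pointers from the right end back toward $X$, maintaining a current pair $W$ and a constraint $c$, initialized to $W\gets Y_0$ and $c\gets\gamma_A(Y_0)-1$ so that $\rho^\astup_\even(X,W;c)=\rho^\astup_\even(X,W)$. At each step, if $W\in\Next(X)$ then by \eqref{basecase} the only remaining portion is $X\cup W$, so I append both $W$ and $X$ and halt; otherwise I set $Z\gets\pi(W;c)$, append $W$, update $c\gets\lambda_A(W)-1$ (clamped to $\gamma_A(Z)-1$), and continue with $W\gets Z$. The updated constraint is exactly the index appearing in the recurrence term $\rho^\astup_\even(X,Z;\lambda_A(W)-1)$, which forces the next last link $(Q,Z)$ to satisfy $\gamma_A(Q)\le\lambda_A(W)-1$; as observed in the proof of \lemref{recur}, this means no edge of $Q$ crosses any edge of $W$, so the emitted pairs glue into a single upper path.

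Correctness would follow by induction on the recursion depth: by \lemref{recur}, $\rho^\astup_\even(X,Y_0)$ decomposes as $w(Y_0)+\rho^\astup_\even(X,Z;\lambda_A(Y_0)-1)$ for $Z=\pi(Y_0;\gamma_A(Y_0)-1)$, and the inductive hypothesis furnishes a max-weighted even-sized upper $(X,Z)$-path whose last link respects the constraint, so appending $Y_0$ yields a max-weighted even-sized upper $(X,Y_0)$-path; \lemref{link} certifies that consecutive pairs of the result indeed form upper links. The degenerate cases are immediate: for $Y_0=X$ the path is $X$ itself (regarded as both upper and lower), and for $Y_0\in\MX\setminus\MX_X$ no such path exists, reported by $\rho^\astup_\even(X,Y_0)=-\infty$. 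For the running time, each backtracking step does an $O(1)$ lookup and emits the two edges of one pair while $\lambda_A(W)$ strictly decreases, so one path of $d$ pairs is built in $O(d)=O(k)$ time; summing over all $Y_0\in\MX$ gives $O(k^2)$, which together with the augmented forward pass yields the claimed $O(k^2+kn)$ time and space.

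The step I expect to be the main obstacle is the constraint bookkeeping: I must ensure the index handed to each recursive pair is $\lambda_A(W)-1$ (and not $\gamma_A$ or $\lambda_A$ of the wrong pair, nor an unclamped value exceeding $\gamma_A(Z)-1$), so that the emitted pairs are mutually non-crossing as the definition of an upper path demands, and I must check that the pointer $\pi(W;c)$ recorded in the forward pass is queried at exactly the index that produced it, so that the reconstructed weight matches $\rho^\astup_\even(X,Y_0)$.
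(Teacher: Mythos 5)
Your proposal is correct and is essentially the paper's own proof: your backpointer $\pi(Y;j)$ is exactly the paper's maximizer record $\chi(Y;j)$ (inherited from $j-1$ unless a strictly better candidate appears in $\Prev(Y;j)$), and your backtracking with the constraint index $\lambda_A(W)-1$ is precisely how the paper traces $\chi$ to recover the path, with the same degenerate cases ($Y=X$, $Y\in\Next(X)$, $Y\notin\MX_X$) and the same $O(k^2+kn)$ time and space bounds. The only cosmetic difference is that you bound each trace by $O(k)$ steps where the paper uses $O(n)$ per path; both stay within the claimed complexity.
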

\begin{proof}
  For $Y\in\MX_X\setminus(\{X\}\cup\Next(X))$ and $j\in[\lambda_A(Y)+1,\gamma_A(Y)-1]$,
  let us store the maximizer $Z=Z^\ast$ in \eqref{recur1} and \eqref{recur2} as $\chi(Y;j)$. 
  Specifically, for $j=\lambda_A(Y)+1$,
  we let $\chi(Y;j)\gets Z^\ast$ if $\Prev(Y;j)\ne\emptyset$,
  and otherwise, we let $\chi(Y;j)\gets\Null$.
  For other $j$, 
  we let $\chi(Y;j)\gets Z^\ast$ if $\Prev(Y;j)\ne\emptyset$
  and $\rho^\astup_\even(X,Y;j-1)<w(Y)+\rho^\astup_\even(X,Z^\ast;\lambda_A(Y)-1)$, and otherwise, we let $\chi(Y;j)\gets\chi(Y;j-1)$.
  Observe that, if $Z=\chi(Y;\gamma_A(Y)-1)$ is not $\Null$,
  then $(Z,Y)$ 
  is the last link of a max-weighted even-sized upper
  $(X,Y)$-path, and otherwise, no such path exists. 
  
  Upon completion of the algorithm,
  if $Y=X$, the only $(X,Y)$-path is $X$ itself. 
  If $Y\in\Next(X)$, then $X\cup Y$ is the required path.
  If $Y\in\MX_X\setminus(\{X\}\cup\Next(X))$
  and $\rho^\astup_\even(X,Y)\ne-\infty$, then
  we can construct a max-weighted even-sized
  upper $(X,Y)$-path by tracing $\chi(Y,\gamma_A(Y)-1)$,
  which requires $O(n)$ time
  since the path size is at most $n$.
  For all the other $Y$, there is no even-sized upper $(X,Y)$-path.
  The algorithm runs in $O(k^2+kn)$ time and space
  (\lemref{evenup}). 
  The construction of paths can be done in $O(kn)$ time.
  We can store $\chi(Y;j)$ in $O(kn)$ space. 
\end{proof}

We can derive the similar results for even-sized lower paths
due to symmetry. 

\begin{lem}
  \label{lem:evendown}
  For a $2$-layered edge-weighted bipartite graph $G$,
  a non-empty admissible set $\MX$ and
  an admissible pair $X\in\MX$,
  we can construct a max-weighted even-sized lower $(X,Y)$-path
  for all $Y\in\MX$ in $O(k^2+kn)$ time and space.
\end{lem}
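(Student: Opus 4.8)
The plan is to derive this lemma from \lemref{evenup_path} via the vertical-reflection symmetry that exchanges the two layers of the drawing, exactly the symmetry the authors allude to just before the statement. Concretely, I would introduce the reflected instance $G'$ obtained from $G$ by swapping the upper line $A$ and the lower line $B$ (relabel each $a_i$ as a lower vertex and each $b_q$ as an upper vertex, keeping the left-to-right positions). Under this reflection the $A$-order and the $B$-order are interchanged, so for every edge $e$ its old $\lambda_A(e)$ becomes its new $\lambda_B$-index and vice versa.

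First I would check that everything defining the problem is invariant under the reflection. The intersection condition, ``($e<_Ae'$ and $e'<_Be$) or ($e'<_Ae$ and $e<_Be'$)'', is symmetric in the two orders, so a pair crosses in $G$ iff it crosses in $G'$; hence $\MX$ is carried to an admissible set of $G'$ and the weight function is unchanged. Likewise the partial order $\prec$, which demands $\gamma_A(T_s)<\lambda_A(T_t)$ and $\gamma_B(T_s)<\lambda_B(T_t)$, is symmetric in $A$ and $B$ and is therefore preserved. Consequently the notion of an $(X,Y)$-path, its size $|M|$, and its weight are all preserved by the reflection.

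The key step is to show that the reflection exchanges upper and lower links. Here one must be careful with the ordered-pair convention $X=(e_X,e'_X)$, which fixes $e_X<_Ae'_X$ (and thus $e'_X<_Be_X$): after swapping the layers the new left-$A$ endpoint of $X$ is the old $<_B$-smaller one, namely $e'_X$, so the roles of $e_X$ and $e'_X$ swap. Rewriting the four clauses of an upper link (condition~(a)) with the swapped labels, and using that $\prec$ and membership in $\MX$ are preserved, I would verify that they become precisely the four clauses of a lower link (condition~(b)) in $G'$, and conversely. Thus a max-weighted even-sized lower $(X,Y)$-path in $G$ is exactly a max-weighted even-sized upper $(X,Y)$-path in $G'$, so $\rho^\astdown_\even(X,Y)$ in $G$ equals $\rho^\astup_\even(X,Y)$ in $G'$.

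Applying \lemref{evenup_path} to the reflected instance $G'$ then constructs a max-weighted even-sized upper $(X,Y)$-path for every $Y\in\MX$ in $O(k^2+kn)$ time and space; reflecting each resulting path back yields the desired even-sized lower $(X,Y)$-path in $G$ within the same bounds, since building $G'$ and translating paths back cost only $O(m+n)$. The main obstacle is the bookkeeping in the key step: because the ordered-pair notation hardwires the $A$-order, one must track carefully how $e_X,e'_X,e_Y,e'_Y$ are reassigned once the layers are swapped and confirm that the four resulting inequalities and $\MX$-memberships reproduce condition~(b) rather than condition~(a). Everything else is a routine transport of \lemref{evenup_path} across the reflection.
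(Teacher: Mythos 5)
Your proposal is correct and matches the paper's own treatment: the paper dispenses with \lemref{evendown} entirely by the remark that the lower-path case follows from \lemref{evenup_path} ``due to symmetry,'' and your reflection argument—including the check that swapping the layers exchanges the roles of $e_X$ and $e'_X$ in the ordered-pair convention and thereby turns condition~(a) into condition~(b)—is exactly that symmetry made explicit.
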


\begin{lem}
  \label{lem:even}
  For a $2$-layered edge-weighted bipartite graph $G$,
  a non-empty admissible set $\MX$, 
  we can compute 
  a max-weighted even-sized $(X,Y)$-path
  and its weight $\rho^\ast_\even(X,Y)$
  for all $X,Y\in\MX$
  in $O(k^3+k^2n)$ time and in $O(k^2+kn)$ space.
\end{lem}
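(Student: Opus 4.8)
The plan is to build $\rho^\ast_\even(X,Y)$ out of the upper and lower quantities, which the preceding lemmas already handle, and then sweep over all choices of the left admissible pair $X$. By \eqref{rhomax} we have $\rho^\ast_\even(X,Y)=\max\{\rho^\astup_\even(X,Y),\rho^\astdown_\even(X,Y)\}$, so it suffices, for every ordered pair $(X,Y)$, to produce the max-weighted upper and lower even-sized paths (together with their weights) and keep whichever is heavier. Concretely, I would fix a single $X\in\MX$ and invoke \lemref{evenup_path} to obtain $\rho^\astup_\even(X,Y)$ and a corresponding upper path for every $Y\in\MX$ in $O(k^2+kn)$ time and space, and symmetrically invoke \lemref{evendown} to obtain $\rho^\astdown_\even(X,Y)$ and a corresponding lower path in the same bounds. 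For each $Y$ I would then set $\rho^\ast_\even(X,Y)$ to the larger of the two values via \eqref{rhomax}, recording which orientation attains it so that the associated path is retained; since $|\MX_X|\le k$, this combine step costs $O(k)$ for the fixed $X$. The degenerate case $Y=X$, which \textsc{EvenUpper} and its lower counterpart exclude by iterating over $\MX_X\setminus\{X\}$, is handled directly by the size-two path $M=X$, giving $\rho^\ast_\even(X,X)=w(X)$.

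Next I would repeat this for each of the $k$ admissible pairs $X\in\MX$. Because every $Y$ for which an even-sized upper or lower $(X,Y)$-path exists lies in $\MX_X$, sweeping $X$ over all of $\MX$ covers every ordered pair, with nonexistent paths correctly recorded as $-\infty$. Each of the $k$ iterations costs $O(k^2+kn)$ time by \lemref{evenup_path} and \lemref{evendown}, so the total running time is $k\cdot O(k^2+kn)=O(k^3+k^2n)$, matching the claim.

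The step requiring the most care is the space bound. A naive accounting would keep all $O(k^2)$ constructed paths resident simultaneously, which costs $\Theta(k^2n)$ and violates the claimed $O(k^2+kn)$. The resolution is that the $O(k^2+kn)$ working structures consumed by \lemref{evenup_path} and \lemref{evendown} for a fixed $X$ — the DAG $L$, the tables $\rho^\astup_\even(X,\cdot\,;\cdot)$ and $\rho^\astdown_\even(X,\cdot\,;\cdot)$, and the traceback pointers $\chi$ — depend only on the current $X$ and can therefore be reused from one iteration to the next. The only data that must persist across all $k$ iterations are the pairwise weights $\rho^\ast_\even(X,Y)$, which occupy $O(k^2)$ space, supplemented by the orientation flag and last-link pointer needed to regenerate any single path on demand in $O(n)$ time. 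Hence the peak space is $O(k^2)+O(k^2+kn)=O(k^2+kn)$, as required. I do not anticipate any further obstacle, since all the combinatorial content was discharged in the upper and lower lemmas; the present statement is essentially a bookkeeping argument whose only sensitive point is this reuse of working memory.
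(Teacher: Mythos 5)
Your proposal is correct and follows essentially the same route as the paper's proof: fix $X\in\MX$, invoke Lemmas~\ref{lem:evenup}, \ref{lem:evenup_path} and \ref{lem:evendown} to get the upper and lower quantities in $O(k^2+kn)$ time and space, combine them via \eqref{rhomax}, and sweep over all $k$ choices of $X$. Your explicit treatment of the space bound (reusing the per-$X$ working structures and regenerating paths on demand rather than storing all $O(k^2)$ paths) is a careful filling-in of a point the paper's proof leaves implicit, not a different approach.
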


\begin{proof}
  By Lemmas~\ref{lem:evenup},
  \ref{lem:evenup_path} and \ref{lem:evendown},
  given $X\in\MX$, we can compute
  max-weighted even-sized upper and lower $(X,Y)$-paths
  and their weights (i.e., $\rho^\astup_\even(X,Y)$ and $\rho^\astdown_\even(X,Y)$) for all $Y\in\MX$
  in $O(k^2+kn)$ time and space.
  The path having a larger weight
  is the required path by \eqref{rhomax}.
  Since $|\MX|=k$, we have the time complexity
  $O(k(k^2+kn))=O(k^3+k^2n)$.
\end{proof}

\paragraph{A max-weighted odd-sized $(X,Y)$-path.}
We compute a max-weighted odd-sized $(X,Y)$-path for given $X,Y\in\MX$
by extending the strategy for the even case that we explained so far.

We consider how to compute the max-weight $\rho^\ast_\odd(X,Y)$. 
Observe that the minimum odd size of an $(X,Y)$-path is three. 
Let $X=(e_X,e'_X)\in\MX$ and $Z=(e_Z,e'_Z)\in\MX$. 
We say that an ordered pair $(X,Z)$ is a {\em wedge}
if one of the followings holds:
\begin{description}
\item[(a)] $e_X=e_Z$ and $e'_X\prec e'_Z$.
\item[(b)] $e_X\prec e_Z$ and $e'_X=e'_Z$. 
\end{description}
If $(X,Z)$ is a wedge that satisfies (a) (resp., (b)),
then we call it an {\em upper wedge} (resp., a {\em lower wedge}). 
See \figref{wedge}.
Analogously to \lemref{link},
we can show that 
any larger odd-sized path is obtained by connecting
upper links to an upper wedge
or lower links to a lower wedge. 

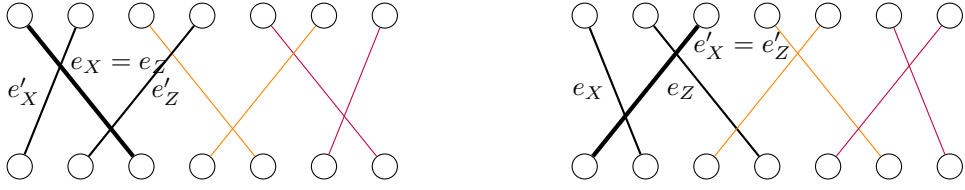
\begin{figure}[t!]
  \centering
  \begin{tabular}{clc}
    \begin{tikzpicture}
      \node [circle,draw,fill=white] (i0) at (0,2) {};
      \node [circle,draw,fill=white] (i1) at (0.8,2) {};
      \node [circle,draw,fill=white] (i2) at (1.6,2) {};
      \node [circle,draw,fill=white] (i3) at (2.4,2) {};
      \node [circle,draw,fill=white] (i4) at (3.2,2) {};
      \node [circle,draw,fill=white] (i5) at (4.0,2) {};
      \node [circle,draw,fill=white] (i6) at (4.8,2) {};
      \node [circle,draw,fill=white] (q0) at (0,0) {};
      \node [circle,draw,fill=white] (q1) at (0.8,0) {};
      \node [circle,draw,fill=white] (q2) at (1.6,0) {};
      \node [circle,draw,fill=white] (q3) at (2.4,0) {};
      \node [circle,draw,fill=white] (q4) at (3.2,0) {};
      \node [circle,draw,fill=white] (q5) at (4.0,0) {};
      \node [circle,draw,fill=white] (q6) at (4.8,0) {};
      \draw [ultra thick] (i0) -- (q2) node [right,pos=0.3] {$e_X=e_Z$};
      \draw [thick] (i1) -- (q0) node [left,pos=0.5] {$e'_X$};
      \draw [orange] (i2) -- (q4);
      \draw [thick] (i3) -- (q1) node [right,pos=0.5] {$e'_Z$};
      \draw [purple] (i4) -- (q6);
      \draw [orange] (i5) -- (q3);
      \draw [purple] (i6) -- (q5);
    \end{tikzpicture}
    & \ &
    \begin{tikzpicture}
      \node [circle,draw,fill=white] (i0) at (0,2) {};
      \node [circle,draw,fill=white] (i1) at (0.8,2) {};
      \node [circle,draw,fill=white] (i2) at (1.6,2) {};
      \node [circle,draw,fill=white] (i3) at (2.4,2) {};
      \node [circle,draw,fill=white] (i4) at (3.2,2) {};
      \node [circle,draw,fill=white] (i5) at (4.0,2) {};
      \node [circle,draw,fill=white] (i6) at (4.8,2) {};
      \node [circle,draw,fill=white] (q0) at (0,0) {};
      \node [circle,draw,fill=white] (q1) at (0.8,0) {};
      \node [circle,draw,fill=white] (q2) at (1.6,0) {};
      \node [circle,draw,fill=white] (q3) at (2.4,0) {};
      \node [circle,draw,fill=white] (q4) at (3.2,0) {};
      \node [circle,draw,fill=white] (q5) at (4.0,0) {};
      \node [circle,draw,fill=white] (q6) at (4.8,0) {};
      \draw [thick] (i0) -- (q1) node [left,pos=0.5] {$e_X$};
      \draw [thick] (i1) -- (q3) node [left,pos=0.5] {$e_Z$};
      \draw [ultra thick] (i2) -- (q0) node [right,pos=0.15] {$e'_X=e'_Z$};
      \draw [orange] (i3) -- (q5);
      \draw [orange] (i4) -- (q2);
      \draw [purple] (i5) -- (q6);
      \draw [purple] (i6) -- (q4);
    \end{tikzpicture}
    \\
    (a) An upper wedge followed by upper links &&
    (b) A lower wedge followed by lower links
  \end{tabular}
  \caption{Wedges $(X,Y)$ and odd-sized paths; a thick edge indicates the edge that is contained in both $X$ and $Y$}
  \label{fig:wedge}
\end{figure}

Then for $X,Y\in\MX$,
we can compute $\rho^\ast_\odd(X,Y)$
in a similar fashion to the even case. 
%
%
We denote by $\rho^\astup_\odd(X,Y)$ (resp., $\rho^\astdown_\odd(X,Y)$)
the max-weight of an odd-sized upper (resp., a lower) $(X,Y)$-path.
We define
$\rho^\astup_\odd(X,Y)\triangleq-\infty$
(resp., $\rho^\astdown_\odd(X,Y)\triangleq-\infty$)
if no such path exists.
Clearly we have
\[
  \rho^\ast_\odd(X,Y)=\max\{\rho^\astup_\odd(X,Y),\rho^\astdown_\odd(X,Y)\}.
\]

Let us focus on upper paths.
The analysis is different from the even case
in that wedges should be taken into account
in the current case.
For $X,Z\in\MX$, we define $\Next_X(Z)$ as follows;
\[
\Next_X(Z)\triangleq\left\{
\begin{array}{ll}
  \{Y\in\MX\mid(Z,Y)\textrm{\ is\ an\ upper\ wedge}\} &\textrm{if\ }Z=X,\\
  \{Y\in\MX\mid(Z,Y)\textrm{\ is\ an\ upper\ link}\} &\textrm{otherwise}. 
\end{array}
\right.
\]
For $Y\in\MX$, we define $\Prev_X(Y)\triangleq\{Z\in\MX\mid Y\in\Next_X(Z)\}$
and for $j\in[\lambda_A(Y)+1,\gamma_A(Y)-1]$,
$\Prev_X(Y;j)\triangleq\{Z\in\Prev_X(Y)\mid\gamma_A(Z)=j\}$.
We define 
$\rho^\astup_\odd(X,Y;j)$
to be the max-weight of an odd-sized upper $(X,Y)$-path
such that $\gamma_A(Z)\in[\lambda_A(Y)+1,j]$ holds,
where $Z$ is an admissible pair on the path
that satisfies $Y\in\Next_X(Z)$. 
If no such path exists,
we define $\rho^\astup_\odd(X,Y;j)\triangleq-\infty$.

We consider a digraph $L_X=(\MX,\ML_X)$ such that
the node set is $\MX$ and
the arc set $\ML_X$ is defined by
$\ML_X=\{(Z,Z')\in\MX\times\MX\mid Z'\in\Next_X(Z)\}$.
We see that $L_X$ is a DAG.
Let $\MY_X=\{Y\in\MX\mid\textrm{there\ is\ a\ path\ from\ }X\textrm{\ to\ }Y\textrm{\ in\ }L_X\}$.
Whenever an odd-sized upper $(X,Y)$-path exists,
$Y\in\MY_X$ holds. 
Then we have $\rho^\astup_\odd(X,Y;j)=-\infty$
for $Y\in\MX\setminus\MY_X$
and $j\in[\lambda_A(Y)+1,\gamma_A(Y)-1]$,
and for $Y\in\Next_X(X)\subseteq\MY_X$, 
\[
\rho^\astup_\odd(X,Y;j)=\left\{
\begin{array}{ll}
  -\infty&\textrm{if\ }j\in[\lambda_A(Y)+1,\gamma_A(X)-1],\\
  w(X\cup Y)&\textrm{if\ }j\in[\gamma_A(X),\gamma_A(Y)-1].
\end{array}
\right.
\]
For $Y\in\MY_X\setminus(\{X\}\cup\Next_X(X))$,
we have the following lemma,
which is analogous to \lemref{recur} in the even case. 

\begin{lem}
  \label{lem:recur_odd}
  For a $2$-layered edge-weighted
  bipartite graph $G$ and an admissible set $\MX$,
  let $X,Y\in\MX$. 
  If $Y\in\MY_X\setminus(\{X\}\cup\Next_X(X))$,
  then for $j=\lambda_A(Y)+1$, 
  \begin{align*}
    \rho^\astup_\odd(X,Y;j)=\left\{
    \begin{array}{ll}
      \displaystyle w(Y)+\max_{Z\in\Prev_X(Y;j)}\Big\{\rho^\astup_\odd(X,Z;\lambda_A(Y)-1)\Big\}&
      \textrm{if\ }\Prev_X(Y;j)\ne\emptyset,\\
      -\infty&\textrm{otherwise},
    \end{array}
    \right.
  \end{align*}
  and for any $j\in[\lambda_A(Y)+2,\gamma_A(Y)-1]$,
  \begin{align*}
    \rho^\astup_\odd(X,Y;j)=\left\{
    \begin{array}{ll}
      \displaystyle\max\Big\{\rho^\astup_\odd(X,Y;j-1),w(Y)+\max_{Z\in\Prev_X(Y;j)}\{\rho^\astup_\odd(X,Z;\lambda_A(Y)-1)\}\Big\}
      &\textrm{if\ }\Prev_X(Y;j)\ne\emptyset,\\
      \rho^\astup_\odd(X,Y;j-1)&\textrm{otherwise.}
    \end{array}
    \right.
  \end{align*}
\end{lem}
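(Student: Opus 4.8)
The plan is to replay the proof of \lemref{recur} almost verbatim, with the wedge-aware objects $\Next_X$, $\Prev_X$, and $\rho^\astup_\odd$ replacing their even-case counterparts. Before touching the recurrences, I would first nail down the structural fact promised just above the statement: every odd-sized upper $(X,Y)$-path decomposes as an upper wedge rooted at $X$ followed by a chain of upper links, in exact analogy with the link-decomposition of even-sized upper paths in \lemref{link}. I would obtain this by rerunning the argument of \lemref{link}: the two degree-one endpoints of the path $H[M]$ correspond to $X$ and $Y$; the comparabilities forced by \lemref{either} order the pairs from left to right, and the only place where two consecutive pairs share an edge is at the root, where the shared edge makes $(X,\cdot)$ a wedge rather than a link. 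This justifies speaking of the \emph{last} upper link $(Z,Y)$ on any such path with $Y\ne X$.

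The key bookkeeping observation that makes the recursion well founded is that, since $Y\in\MY_X\setminus(\{X\}\cup\Next_X(X))$, we have $Y\notin\Next_X(X)$ and hence $X\notin\Prev_X(Y)$. Thus every $Z\in\Prev_X(Y)$ satisfies $Z\ne X$, so $Y\in\Next_X(Z)$ means $(Z,Y)$ is a genuine upper link. Because an upper link attaches exactly the two edges of $Y$, deleting them from an odd-sized upper $(X,Y)$-path leaves an odd-sized upper $(X,Z)$-path, so parity is preserved and the recurrence legitimately refers to $\rho^\astup_\odd$ on both sides. I would also remark that the predecessor $Q$ of $Z$ on this shorter path may itself attach to $Z$ through a wedge (precisely when $Z\in\Next_X(X)$, \figref{wedge}) rather than a link; this is handled uniformly, since the definition of $\rho^\astup_\odd(X,Z;\cdot)$ already encodes the wedge root through $\Next_X(X)$ and its base-case formula.

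With these in place the recurrence proof is mechanical. Fixing $j=\lambda_A(Y)+1$: if $\Prev_X(Y;j)=\emptyset$ there is no odd-sized upper $(X,Y)$-path whose last link $(Z,Y)$ has $\gamma_A(Z)=j$, giving $-\infty$; otherwise, for each $Z\in\Prev_X(Y;j)$ I take a max-weighted odd-sized upper $(X,Y)$-path $M^\ast_Z$ with last link $(Z,Y)$, split it as $M^\ast_Z=M_Z\cup Y$, and use the geometry of \figref{link} --- the connector $Q$ into $Z$ satisfies $\gamma_A(Q)\le\lambda_A(Y)-1$ and no edge of $Q$ meets any edge of $Y$ --- to conclude $w(M_Z)=\rho^\astup_\odd(X,Z;\lambda_A(Y)-1)$ and $w(M^\ast_Z)=w(Y)+w(M_Z)$. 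Maximizing over $Z$ gives the first formula, and the $-\infty$ branch follows by induction when no path exists. For $j\in[\lambda_A(Y)+2,\gamma_A(Y)-1]$ the value is the larger of the previous value $\rho^\astup_\odd(X,Y;j-1)$ (paths whose last link uses $\gamma_A(Z)<j$) and the freshly available candidates with $\gamma_A(Z)=j$, which is exactly the stated incremental maximum.

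The hard part is not the recurrence but the structural lemma underpinning it: establishing the wedge-then-links normal form rigorously and, in particular, verifying the geometric inequality $\gamma_A(Q)\le\lambda_A(Y)-1$ so that $Q$ and $Y$ are comparable and their weights add without overlap. This requires re-examining the $<_A$/$<_B$ positions exactly as in \lemref{link}, and care that the recursion terminates at the wedge encoded by $\Next_X(X)$ rather than looping back through $X$ --- a point guaranteed by the exclusion $X\notin\Prev_X(Y)$ noted above.
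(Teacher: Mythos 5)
Your proposal is correct and takes essentially the same approach as the paper: the paper offers no separate proof for this lemma, stating only that it is analogous to \lemref{recur}, and your argument is precisely that analogy carried out in full --- the wedge-then-links normal form, the observation that $Y\notin\Next_X(X)$ forces $X\notin\Prev_X(Y)$ so every last attachment $(Z,Y)$ is a genuine link (hence the added weight is $w(Y)$, with the shared-edge weight $w(X\cup Y)$ confined to the base case), and the replay of the split $M^\ast_Z=M_Z\cup Y$ with the comparability bound $\gamma_A(Q)\le\lambda_A(Y)-1$. Nothing in it deviates from, or falls short of, the proof the paper leaves implicit.
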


Given $X$, we can compute $\rho^\astup_\odd(X,Y)$
for all $Y\in\MX$ in $O(k^2+kn)$ time and space,
as we have done for the even case.
Consequently, we have the following lemma. 

\begin{lem}
  \label{lem:odd}
  For a $2$-layered edge-weighted bipartite graph $G$,
  a non-empty admissible set $\MX$, 
  we can compute 
  a max-weighted odd-sized $(X,Y)$-path
  and its weight $\rho^\ast_\odd(X,Y)$
  for all $X,Y\in\MX$
  in $O(k^3+k^2n)$ time and in $O(k^2+kn)$ space.
\end{lem}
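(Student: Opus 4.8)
The plan is to follow the template of \lemref{even} exactly, the only structural change being that the leftmost building block of an odd-sized path is a wedge rather than a link. First I would prove the odd analogue of \lemref{link}, namely the characterization asserted just before \figref{wedge}: if $M\in\MM_2$ with $H[M]$ a path $e_1\rightarrow\dots\rightarrow e_{2d+1}$ and $e_1<_Ae_{2d+1}$, then the leftmost admissible pair $X=\{e_1,e_2\}$ and its successor $Z=\{e_2,e_3\}$ share the node $e_2$, so $(X,Z)$ is a wedge, and every subsequent consecutive pair of admissible pairs is a link of the same orientation as that wedge. The argument of \lemref{link} carries over almost verbatim: the contradiction-via-intermediate-intersection forces the monotone chain among the relevant edges, and the orientation (upper or lower), once fixed at the wedge, propagates along the links. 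This identifies an odd-sized upper $(X,Y)$-path with an upper wedge $(X,Z)$ extended by a chain of upper links to $Y$, which is exactly what the definitions of $\Next_X$ and $\Prev_X$ encode.

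Given this characterization and \lemref{recur_odd}, the computation is the direct odd analogue of \textsc{EvenUpper} (\algref{evenup}). Fixing $X\in\MX$, I would build the DAG $L_X$ and the reachable set $\MY_X$ in $O(k^2)$ time, initialize $\rho^\astup_\odd(X,Y;j)$ by the wedge base case for $Y\in\Next_X(X)$, and then process the remaining $Y\in\MY_X$ in topological order of $L_X$, applying \lemref{recur_odd} to fill $\rho^\astup_\odd(X,Y;j)$ for $j\in[\lambda_A(Y)+1,\gamma_A(Y)-1]$. The accounting matches \lemref{evenup}: queue maintenance costs $O(k^2)$ amortized since $\sum_Y|\Next_X(Y)|=O(k^2)$, the recurrences cost $\sum_Y O(|\Prev_X(Y)|)=O(k^2)$, and there are $O(kn)$ table entries, giving $O(k^2+kn)$ time and space per $X$. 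By the left--right symmetry invoked in \lemref{evendown}, the lower quantities $\rho^\astdown_\odd(X,Y)$ are computed within the same bound, and $\rho^\ast_\odd(X,Y)=\max\{\rho^\astup_\odd(X,Y),\rho^\astdown_\odd(X,Y)\}$. For path construction I would store the maximizer $\chi(Y;j)$ as in \lemref{evenup_path} and trace it back from $j=\gamma_A(Y)-1$; the only difference is that the base step of the trace reconstructs the initial wedge $X\cup Z$ rather than an initial link, and this still costs $O(n)$ per path.

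Summing over the $k$ choices of $X$ multiplies the per-$X$ cost $O(k^2+kn)$ by $k$, yielding $O(k^3+k^2n)$ total time; the $O(k^2+kn)$ working storage is reused across iterations of $X$, so the space bound is unchanged. The main obstacle is the first step, verifying the odd analogue of \lemref{link}: one must confirm that a single wedge at $X$ (handled specially through $\Next_X(X)$) followed by uniformly oriented links captures \emph{every} odd-sized path, and in particular that the wedge's orientation is forced and cannot give way to a mixed upper/lower structure. Once this characterization is secured, the recurrence of \lemref{recur_odd} and the complexity analysis transfer from the even case without further difficulty.
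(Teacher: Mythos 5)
Your proposal is correct and follows essentially the same route as the paper: the paper also reduces the odd case to the even-case machinery by treating the initial wedge as the special first step (encoded in $\Next_X(X)$), invoking the wedge-analogue of \lemref{link} without detailed proof, applying \lemref{recur_odd} in topological order of the DAG $L_X$, and multiplying the per-$X$ bound $O(k^2+kn)$ by $k$ while reusing working storage. Your explicit flagging of the wedge characterization (orientation forced, links propagate the wedge's orientation) as the one point needing verification is exactly the step the paper dispatches by analogy to \lemref{link}.
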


\paragraph{Construction of $\MT$.}
Now we are ready to explain
how to construct the collection $\MT$
of trapezoids. 

\begin{lem}
  Given an instance $(G,w,\MX)$ of the MW-$2$-CPEMP,
  we can construct a collection $\MT$ of weighted trapezoids
  in $O(k^3+k^2n+m^2)$ time and in $O(k^2+kn+m^2)$ space
  such that any optimal solution
  in the corresponding NTSP is also optimal
  for $(G,w,\MX)$.   
\end{lem}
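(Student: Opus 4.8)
The plan is to build $\MT$ out of all connected $2$-CPE matchings, each viewed as a trapezoid, and then invoke \lemref{comp}, by which the connected components of any $2$-CPE matching form a chain under $\prec$. Since every node of $H[M]$ has degree at most two, a connected $2$-CPE matching $M$ is of exactly one type: $H[M]$ is a single node (so $M=\{e\}$), a cycle, or a path. Accordingly I would fill $\MT$ with (i) the $m$ line-segment trapezoids $\{e\}$, $e\in E$; (ii) every $3$- and $4$-cycle, which by \lemref{cycle} are the only cycles and which \lemref{enumcycle} enumerates in $O(k^2+m^2)$ time and $O(m^2)$ space; and (iii) for each ordered pair $(X,Y)\in\MX\times\MX$ admitting a max-weighted $(X,Y)$-path, one trapezoid with corners $a_{\lambda_A(X)},a_{\gamma_A(Y)},b_{\lambda_B(X)},b_{\gamma_B(Y)}$ and weight $\rho^\ast(X,Y)$. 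Step (iii) is justified because all $(X,Y)$-paths share this single trapezoid, so keeping only a heaviest one loses nothing; by \lemref{even} and \lemref{odd} the $k^2$ values $\rho^\ast(X,Y)=\max\{\rho^\ast_\even(X,Y),\rho^\ast_\odd(X,Y)\}$, together with witnessing paths, are obtained in $O(k^3+k^2n)$ time and $O(k^2+kn)$ space. Discarding trapezoids of weight $-\infty$, we get $|\MT|=O(m+k^2)$ and total cost $O(k^3+k^2n+m^2)$ time and $O(k^2+kn+m^2)$ space.

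For correctness I would prove that the NTSP optimum on $\MT$ equals the MW-$2$-CPEMP optimum. For one direction, let $\MS$ be any feasible NTSP solution, i.e.\ a chain $T_1\prec\dots\prec T_r$, and attach to each $T_i$ its matching $M_i$ (a witnessing max-weighted path in case (iii)). The non-contact condition makes the intervals $[\lambda_A(T_i),\gamma_A(T_i)]$ pairwise disjoint, and likewise on the $B$-side, so the $M_i$ are vertex-disjoint and $\bigcup_i M_i$ is a matching with weight $\sum_i\omega(T_i)$. For $e\in M_i$, $f\in M_j$ with $i<j$ we have $e<_Af$ and $e<_Bf$, hence $e$ and $f$ never cross; thus every crossing of $\bigcup_i M_i$ stays inside a single $M_i$. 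As each $M_i$ is a connected $2$-CPE matching, all such crossings are admissible and every node keeps degree at most two, so $\bigcup_i M_i$ is a $2$-CPE matching of weight equal to that of $\MS$. Hence the NTSP optimum is at most the MW-$2$-CPEMP optimum, and an optimal NTSP solution maps to an optimal $2$-CPE matching.

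For the reverse inequality, take an optimal $2$-CPE matching $M^\ast$ and split $H[M^\ast]$ into components $M_1,\dots,M_r$, indexed by \lemref{comp} so that $M_1\prec\dots\prec M_r$. Each $M_i$ is a single edge, a cycle, or an $(X_i,Y_i)$-path, and in every case $\MT$ holds a trapezoid $T_i$ with exactly the corners of $M_i$ and $\omega(T_i)\ge w(M_i)$ (with equality except possibly for paths, where $\omega(T_i)=\rho^\ast(X_i,Y_i)$). Because the $T_i$ reproduce the corners of the $M_i$, the chain $M_1\prec\dots\prec M_r$ yields a feasible NTSP solution $T_1\prec\dots\prec T_r$ of weight $\sum_i\omega(T_i)\ge w(M^\ast)$. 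Thus the NTSP optimum is at least the MW-$2$-CPEMP optimum, the two bounds coincide, and any optimal NTSP solution on $\MT$ is optimal for $(G,w,\MX)$.

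The main obstacle is the first correctness direction: one must guarantee that gluing the components chosen by an arbitrary NTSP chain never introduces a crossing or a shared endpoint between two distinct components, nor violates the global degree bound. All of this reduces to the strict inequalities $\gamma_A(T_s)<\lambda_A(T_t)$ and $\gamma_B(T_s)<\lambda_B(T_t)$, which push the $A$- and $B$-supports of distinct components into disjoint left-to-right intervals; the delicate point is that these inequalities simultaneously forbid crossings (two edges with $e<_Af$ and $e<_Bf$ cannot intersect) and shared vertices, so that each node of $H[\bigcup_i M_i]$ has the same degree as within its own component and the $2$-CPE property is preserved globally.
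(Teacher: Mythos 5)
Your proposal is correct and takes essentially the same route as the paper: $\MT$ is assembled from the isolated edges, the $3$- and $4$-cycles of \lemref{enumcycle}, and one max-weighted $(X,Y)$-path trapezoid per pair of admissible pairs via Lemmas~\ref{lem:even} and~\ref{lem:odd}, with identical complexity bookkeeping. The only difference is presentational: you spell out the two-sided correctness argument (NTSP chain $\leftrightarrow$ $2$-CPE matching, via \lemref{comp} and the non-contact condition) that the paper compresses into the single assertion that it ``suffices'' to collect these trapezoids.
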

\begin{proof}
  It suffices to collect all trapezoids
  that correspond to 2-CPE matchings $M$
  such that $H[M]$ (in the auxiliary graph $H$)
  is an isolated point,
  a cycle or a max-weighted $(X,Y)$-path for some $X,Y\in\MX$.
  It takes $O(m)$ time for isolated points,
  $O(k^2+m^2)$ time for cycles by \lemref{enumcycle},
  and $O(k^3+k^2n)$ time for max-weighted $(X,Y)$-paths
  by Lemmas~\ref{lem:even} and \ref{lem:odd},
  where $\rho^\ast(X,Y)=\max\{\rho^\ast_\odd(X,Y),\rho^\ast_\even(X,Y)\}$ holds by \eqref{ast}.
  The space complexity is immediate from these lemmas. 
\end{proof}

This lemma tells $\Gamma=O(k^3+k^2n+m^2)$. 
By \thmref{NTSP} and $|\MT|=O(k^2+m)$, 
we have the following theorem immediately. 

\begin{thm}
  \label{thm:CPEMP.2}
  Given an instance $(G,w,\MX)$ of the MW-$2$-CPEMP,
  we can find a max-weighted $2$-CPE matching
  in $O\big(k^3+k^2n+m(m+\log n)\big)$ time
  and in $O(k^2+kn+m^2)$ space. 
\end{thm}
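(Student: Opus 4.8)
The plan is to obtain the theorem as an immediate corollary of the preceding construction lemma combined with \thmref{NTSP}, so the bulk of the argument is simply assembling the two complexity bounds. First I would recall why the reduction is valid: by \lemref{comp} every $2$-CPE matching $M$ splits into the connected components of $H[M]$, and these components form a chain under $\prec$; viewing each component as a weighted trapezoid turns a max-weighted $2$-CPE matching into a max-weighted chain, which is exactly an optimal NTSP solution on the trapezoid collection $\MT$. The preceding lemma already provides such a $\MT$ with the property that any NTSP-optimal solution on $\MT$ is optimal for $(G,w,\MX)$, built in $\Gamma=O(k^3+k^2n+m^2)$ time and $O(k^2+kn+m^2)$ space.

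Next I would pin down $|\MT|$: the trapezoids arise from isolated points and single edges ($O(m)$ of them), from $3$- and $4$-cycles ($O(k^2)$ by \lemref{enumcycle}), and from paths, where for each ordered pair $(X,Y)\in\MX\times\MX$ we keep only one max-weighted $(X,Y)$-path ($O(k^2)$). Hence $|\MT|=O(k^2+m)$. By \thmref{NTSP}, \algref{NTSP} solves the NTSP on $\MT$ in $O(|\MT|\log n+n)=O((k^2+m)\log n+n)$ time and $O(|\MT|+n)$ space, after which \algref{NTSPconst} recovers an optimal subcollection in $O(|\MT|+n)$ time. To return the matching rather than just its value, I would attach to every trapezoid its underlying edge set---immediate for isolated points and cycles, and produced for paths by \lemref{evenup_path} and the analogous odd-size construction---so that the union of the selected trapezoids' edge sets is the sought $M^\ast$.

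Finally I would add up the costs. The running time is $O\big(k^3+k^2n+m^2+(k^2+m)\log n+n\big)$; since $\log n\le n$ we have $k^2\log n\le k^2n$ and $n\le k^2n$, so both terms are absorbed, yielding $O\big(k^3+k^2n+m^2+m\log n\big)=O\big(k^3+k^2n+m(m+\log n)\big)$. For space, combining $O(k^2+kn+m^2)$ with $O(k^2+m+n)$ and using $n\le kn$ and $m\le m^2$ leaves $O(k^2+kn+m^2)$. Because the construction lemma has already absorbed both the correctness argument and the dominant $O(k^3+k^2n+m^2)$ cost, no genuine difficulty remains here; the only point that needs care is precisely this arithmetic, namely verifying that the $O((k^2+m)\log n+n)$ overhead of the NTSP solver is swallowed by the construction cost so that the stated bound comes out exactly as claimed.
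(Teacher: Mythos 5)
Your proposal is correct and follows essentially the same route as the paper: it invokes the construction lemma for $\MT$ (with $\Gamma=O(k^3+k^2n+m^2)$ and $|\MT|=O(k^2+m)$), applies \thmref{NTSP} to the resulting NTSP instance, and checks that the solver's $O((k^2+m)\log n+n)$ overhead is absorbed into the construction cost. The extra details you supply---justifying the reduction via \lemref{comp} and recovering the matching itself through \algref{NTSPconst} and the stored paths of \lemref{evenup_path}---are exactly what the paper leaves implicit when it says the theorem follows ``immediately.''
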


\section{Concluding Remarks}
\label{sec:conc}

In the present paper,
we have considered the max-weighted
$c$-CPE matching problem (MW-$c$-CPEMP).
Given $(G,w,\MX)$,
the problem asks to find a max-weighted matching $M^\ast$
such that each edge in $M^\ast$ intersects with at most $c$
other edges in $M^\ast$, and that
all edge crossings are contained in $\MX$.
The MW-$c$-CPEMP is regarded as a relaxation of the MW-0-CPEMP
(i.e., max-weighted crossing-free matching problem).
The degree of relaxation is represented by the constant $c$. 
Besides chronobiology, 
we believe that the MW-$c$-CPEMP should have
many possibilities for application and extension.

Our approach reduces the MW-$c$-CPEMP
to the non-contact trapezoid selection problem (NTSP).
Given $(\MT,\omega)$, the NTSP asks 
to find a max-weighted subcollection of trapezoids
such that no two of them contact each other. 
We propose an algorithm for the NTSP that runs
in $O(|\MT|\log n+n)$ time (\thmref{NTSP}).
The proposed algorithm is an extension of
the Malucelli et al.'s algorithm
for the MW-0-CPEMP~\cite{MOP.1993}.  
We 
construct an NTSP instance $(\MT,\omega)$
such that any optimal solution for $(\MT,\omega)$ is
also optimal for the underlying MW-$c$-CPEMP instance. 
We explained how to construct such an NTSP instance
and showed that the problem is polynomially solvable
for $c\le2$ (Theorems~\ref{thm:CPEMP.0} to \ref{thm:CPEMP.2}).
We leave analyses for the cases of $c\ge3$ open.

The constraint that we have treated in the paper,
at-most-$c$-crossings-per-edge,
depends on the given drawing of the graph. 
We dealt with 2-layered drawing on parallel straight lines,
one of the conventional drawing layouts,
but the problem can be extended to others;
e.g., 2-layered drawing on curves~\cite{GGL.2008},
2-layered radial drawing~\cite{BETT.1999},
2D geometry drawing~\cite{CARB.2015}.
Observe that the drawing implicitly specifies
a conflict list of edge pairs that should not be
included in a solution at the same time. 
Like the recent studies mentioned in \secref{bg.rel},
it would be interesting to explore
problems of finding an ``optimal'' subgraph
under the at-most-$c$-conflict constraint,
where a conflict list 
is given as a part of the input
instead of a graph drawing.

\end{document}